\newtheorem{theorem}{Theorem}
\newtheorem*{theorem*}{Theorem}
\newtheorem{lemma}{Lemma}
\newtheorem{proposition}{Proposition}
\newtheorem*{remark}{\textit{Remark}}
\newtheorem{definition}{Definition}
\def\beq{\begin{eqnarray}}
\def\eeq{\end{eqnarray}}
\def\blem{\begin{lemma}}
\def\elem{\end{lemma}}
\def\bprop{\begin{proposition}}
\def\eprop{\end{proposition}}
\def\bprop{\begin{proposition}}
\def\eprop{\end{proposition}}
\newcommand{\tr}{{\mathrm{tr}}}
\renewcommand{\rho}{\varrho}
\newcommand{\Hl}{\mathcal{H}}
\newcommand{\Rl}{\mathbb{R}} % mathfrac
\newcommand{\Cl}{\mathbb{C}}
\newcommand{\Id}{\openone}
\newcommand{\ket}[1]{ | #1 \rangle}
\newcommand{\bra}[1]{ \langle #1 |}
\newcommand{\proj}[1]{\ket{#1}\hspace{-2.5pt}\bra{#1}}
\newcommand{\bk}[2]{\langle #1  |  #2 \rangle}
\newcommand{\vn}{\mathbf{n}} %bold vector
\newcommand{\vnt}{\mathbf{\tilde{n}}}
\renewcommand{\epsilon}{\varepsilon}
\newcommand{\norm}[1]{\left|\hspace{-1pt}\left|\hspace{1pt} #1 \hspace{1pt}\right|\hspace{-1pt}\right|}
\newcommand{\Mb}{\mathbf{M}}
\newcommand{\Mbt}{\mathbf{\tilde{M}}}
\newcommand{\td}[2]{\mathrm{d}_{\mathrm{tr}}( #1 , #2 )} % trace distance
\newcommand{\braket}[2]{\langle #1 | #2 \rangle}
\providecommand{\customgenericname}{}
\newcommand{\newcustomtheorem}[2]{%
  \newenvironment{#1}[1]
  {%
   \renewcommand\customgenericname{#2}%
   \renewcommand\theinnercustomgeneric{##1}%
   \innercustomgeneric
  }
  {\endinnercustomgeneric}
}
\begin{document}

\title{A universal scheme for robust self-testing in the prepare-and-measure scenario}

\author{Nikolai Miklin} \email{nikolai.miklin@ug.edu.pl}
\affiliation{Institute of Theoretical Physics and Astrophysics, National Quantum Information Center, Faculty of Mathematics, Physics and Informatics, University of Gdansk, 80-306 Gda\'nsk, Poland}
\affiliation{International Centre for Theory of Quantum Technologies (ICTQT), University of Gdansk, 80-308 Gda\'nsk, Poland}
\orcid{0000-0001-8046-382X}
\author{Micha{\l} Oszmaniec} \email{oszmaniec@cft.edu.pl}\affiliation{Center for Theoretical Physics, Polish Academy of Sciences, Al.~Lotnik\'o{}w 32/46, 02-668
Warszawa, Poland}
\maketitle

\begin{abstract}
We consider the problem of certification of arbitrary ensembles of pure states and projective measurements solely from the experimental statistics in the prepare-and-measure scenario assuming the upper bound on the dimension of the Hilbert space. To this aim, we propose a universal and intuitive scheme based on establishing perfect correlations between target states and suitably-chosen projective measurements. The method works in all finite dimensions and allows for robust certification of the overlaps between arbitrary preparation states and between the corresponding measurement operators. Finally, we prove that for qubits, our technique can be used to robustly self-test arbitrary configurations of pure quantum states and projective measurements. These results pave the way towards the practical application of the prepare-and-measure paradigm to certification of quantum devices.
\end{abstract}

\section{Introduction}

Quantum devices are becoming more and more complex and the possibilities of their precise control and manipulation keep increasing. Recently reported demonstration of quantum computational advantage by Google~\cite{Suprem2019} is only an intermediate milestone and  quantum technologies have a  potential real-life applications in fields such as quantum sensing~\cite{SensReview2017}, simulation of quantum systems~\cite{SimReview2014}, efficient computation~\cite{RevAlg2016} and machine learning~\cite{RevQuantML2017,RevQML2017_2}.

With the increasing complexity of quantum systems, there is a growing need for certification and verification of their performance. This task is usually realized via the combination of quantum tomography and various benchmarking schemes (see~\cite{Certification2019} for a recent review). However, these methods, despite being  powerful and universally applicable, depend on the assumptions about the inner workings of quantum systems, such as perfect measurements or  uncorrelated and independent errors.	In contrast to these approaches \emph{self-testing} is a method which  aims at proving the uniqueness of the implemented states or measurements based solely on the observed statistics and under minimal physical assumptions.  
	
The paradigm of self-testing was first introduced in the context of quantum cryptography~\cite{mayers2003self},  with the aim to obtain trust in cryptographic devices (see~\cite{vsupic2019self} for a recent review).  Initially, it was applied to correlations observed in the Bell scenario~\cite{bell1964einstein}  (see e.g.,~\cite{mayers2003self,chen2016natural,coladangelo2017all,bowles2018device}). The most know result in this area is certification of the singlet state in the case of maximal violation of the Clauser-Horne-Shimony-Holt Bell inequality~\cite{popescu1992states}. 
%Recently, there is also a lot of interest in prepare-and-measure scenarios that are more experimentally appealing (see  e.g.,~\cite{ahrens2014experimental,brask2017megahertz,aguilar2018certifying,anwer2020experimental}). 

With the growing number of results on self-testing, more and more attention is being drawn to prepare-and-measure scenarios, that are more experimentally appealing as compared to the one of Bell (see  e.g.,~\cite{ahrens2014experimental,brask2017megahertz,aguilar2018certifying,anwer2020experimental}).
Therein, one does not need to ensure space-like separation of the measurement events by two parties; in contrast, one party, Alice, communicates some of her states to Bob, who measures them in some basis of his choice. In order to get meaningful certification results further assumptions are needed. In the most commonly studied \emph{semi-device-independent} (SDI) scenario~\cite{SDIQKD2011}, one assumes that the dimension of the quantum system used for transmitting information is bounded from above. There exist, however, alternative approaches based on other constraints like minimal overlap~\cite{brask2017megahertz}, mean energy constraint~\cite{van2017semi} or entropy constraint~\cite{chaves2015device}.

It is important to point out that without the assumption on the dimension, or an equivalent, no certification of quantum systems is possible in the prepare-and-measure scenario. The main reason is the fact that any observed distribution resulting from quantum preparations and quantum measurements can also be obtained through classical communication if the latter is unconstrained (see, e.g., Ref.~\cite{fritz2010quantum}). In this work, we chose the assumption of the bounded dimension as we find it physically motivated. For example, if photons' polarization is used as a carrier of quantum information, it is natural to assume this upper bound to be equal to two. By ensuring that no other degree of freedom of photons is correlated with the choice of her preparation, Alice can gain trust that no additional information can be leaked to Bob or an adversary.	

Let us briefly recap on what has been done so far in the area of SDI self-testing.
First, self-testing results were proven for mutually unbiased bases (MUBs) in dimension $2$, for both state ensembles and measurements~\cite{tavakoli2018self}. This was further generalised to SDI certification of mutual unbiasedness of pairs of bases in an arbitrary dimension in Ref.~\cite{farkas2019self}. Methods for self-testing of extremal qubit positive operator-valued measures (POVMs) were proposed in~\cite{mironowicz2019experimentally,tavakoli2018self2} and further extended to  symmetric-informationally complete (SIC) POVMs \cite{tavakoli2019enabling}.  Importantly, all of the above results either rely on numerical approaches, for general state preparations and POVMs, or work only for special scenarios that exhibit many symmetries. 

In this work we propose a simple analytical method allowing to certify overlaps between preparations of arbitrary pure states and arbitrary projective measurements in \emph{qudit} systems. The scheme relies on establishing perfect correlations between preparation states and outcomes of suitably-chosen projective measurements.  The method is universally applicable and robust to experimental noise. We prove that  for qubits our SDI certification method can be used to obtain a robust self-testing result for arbitrary preparations of pure qubit states and corresponding projective measurements. While for higher dimensions we do not show self-testing, our scheme allows for SDI certification of numerous inherently quantum properties. The examples include, but are not limited to:  \emph{arbitrary} overlaps between any number of measurement bases,  MUB conditions, information-completeness of measurements, and SIC relations among measurement effects.  

We believe that our findings greatly extend the applicability of the  paradigm of self-testing in the SDI setting. Our approach should be of interest to the researchers working on certification of near-term quantum computers \cite{Certification2019} since our scheme can be used to characterize state-preparation and measurement errors, which is one of the major problems in the certification of quantum devices \cite{GateSetTomography2017}. We also expect possible cryptographic applications as our setup is very similar to the one of textbook quantum key distribution schemes \cite{bb84,bennett1992quantum}. The perfect correlations can be utilized for generation of the secret key, while the rest can be used to estimate the security. Thus, our methods can be directly applied for certification of quantum devices implementing protocols such as BB84~\cite{bb84}, which is normally achieved by introducing additional preparation states or measurement bases~\cite{ErikQKD}.

Before we proceed, let us first fix some of the notations used in this paper. Let $X$ be a linear operator acting on a finite-dimensional Hilbert space $\Hl$. Throughout the paper we use $\norm{X}$, $\norm{X}_F$ to denote operator norm and Frobenius norm of $X$. We will also use $|\vn|$ to denote the Euclidean norm of $\vn\in\Rl^3$, and $[n]$ to denote an $n$-element set $\{1,2,\dots,n\}$.

\section{Description of the scenario}
	
We consider the prepare-and-measure scenario in which in each run of the experiment Alice prepares a quantum $d$-level system in one of the states from a finite set of preparations $\rho_a^x$, for which we use two indexes $x\in [n]$ and $a\in [d]$. The purpose of the double index will become clear later. Subsequently, Bob performs a measurement on this state with a finite choice of measurement settings $y\in[n]$ having the possible outcomes $b\in[d]$. We assume that the parties do not communicate in any other way and do not have access to any entangled states or shared randomness~\cite{deVincente2017}. This implies the standard i.i.d.~assumption that in all the experiment runs, the same choices of preparations lead to the identical and uncorrelated quantum states prepared by Alice's device. The equivalent assumption is placed on Bob's measuring device. The role of these assumptions is discussed later in the text. This also implies that the observed statistics $p(b|a,x,y)$ are given by the Born rule i.e.,~$p(b|a,x,y)= \tr(\rho^x_a M^y_b)$, where $\Mb^y = (M^y_1,M^y_2,\dots,M^y_d)$ is a quantum measurement (POVM) performed by Bob upon the choice of the setting $y$. The goal of SDI certification is then to identify the preparation states and the measurements, or their properties, based solely on the observed statistics $p(b|a,x,y)$ \emph{assuming} the upper bound on the dimension $d$ and the validity of Born's rule. We say that certain states $\rho^x_a$ and measurements $\Mb^y$  can be \emph{self-tested} if the observed statistics specify these objects uniquely up to a unitary transformation and, perhaps, a global transposition.   

\begin{figure}[t!] \centering
	\includegraphics[width=.5\textwidth]{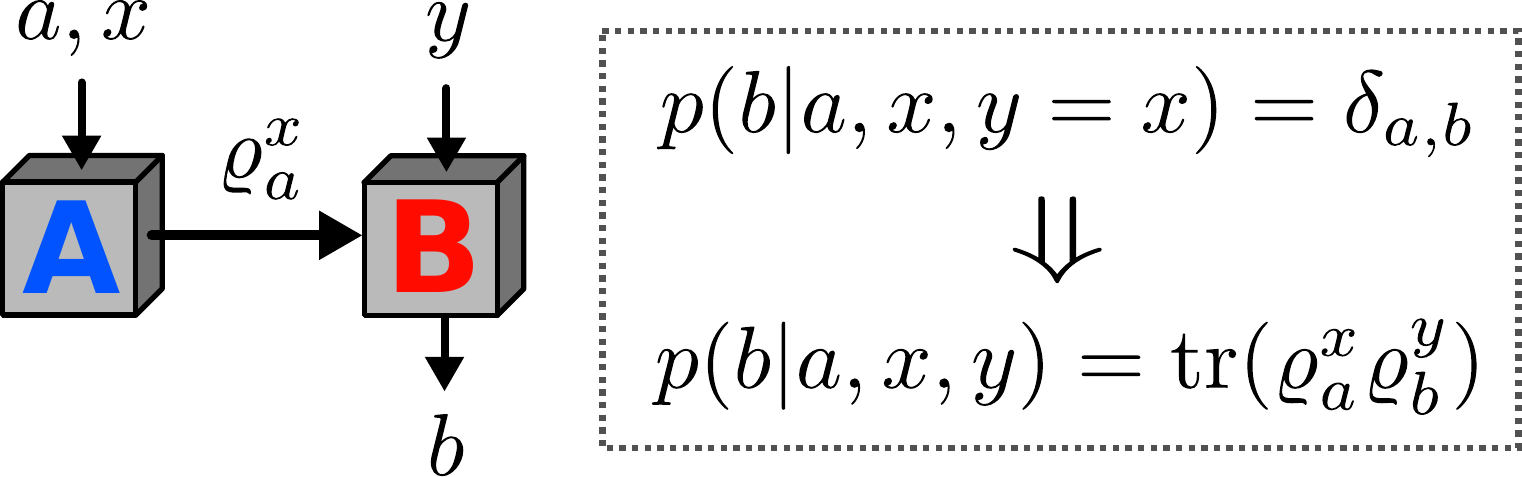}
	\caption{The idea of the certification scheme for the overlaps between states. Alice chooses her inputs $a,x$ and Bob his input $y$. Alice sends states $\rho^x_a$ to Bob, who produces his outcome $b$. After establishing perfect correlations between preparation states and measurements for $y=x$, the rest of the statistics can be used to compute the overlaps between the preparation states. This can be used to directly certify presence of coherences between Alice's input states and between effects of measurements implemented by Bob.}
	\label{fig:scheme}
\end{figure}
	
\section{Certification of overlaps} 	We start with a presentation of our scheme for certifying pairwise overlaps between \emph{pure qudit} preparation states and between the corresponding \emph{projective} measurements. By ``corresponding" we mean that in our scheme we set these states and measurements to be ``equal" in a sense that $\varrho^x_a = (\varrho^x_a)^2 = M^x_a$ for all $a\in[d]$ and $x\in[n]$. This is the reason why we chose to enumerate the preparation states with two indexes. The choice of $d$ and $n$ is arbitrary as long as $d>1$ and $n\geq 1$. In what follows we will refer to these objects as target pure states and target measurements respectively. Their experimental counterparts we denote as $\tilde{\rho}_a^x$, and $\Mbt^y$ respectively. By ``experimental" we mean any states and measurements defined over a Hilbert space of dimension at most $d$ and that reproduce the \emph{observed statistics} i.e.,~$\tilde{p}(b|a,x,y)=\tr(\tilde{\rho}^x_a \tilde{M}^y_b)$. Clearly, we \emph{do not} assume that the experimental states and measurement have to be ``equal". Neither do we assume any properties of experimental states or measurements like the purity of the projective character.

The idea of our certification scheme is very intuitive, yet powerful (see Fig.~\ref{fig:scheme}). Assume that Alice and Bob prepared their devices in a way that $\tilde{p}(b|a,x,y)=1$, whenever $y=x$ and $b=a$. In other words, outcomes of Bob's measurement are \emph{perfectly correlated} with the preparations of Alice (whenever $x=y$). Since we assumed the upper bound of $d$ on the Hilbert space's dimension in which experimental states and measurements are defined, and in this space, we have a set of $d$ quantum states and a POVM with $d$ effects which produce perfect correlations, we can conclude that, first, the Hilbert space's dimension must be exactly $d$, and second that $\tilde{\varrho}^x_a = \tilde{M}^x_a$, for all $a\in[d]$ and $x\in[n]$. Clearly, after these perfect correlations are established, the ``cross-terms'',  can be used to compute the overlaps between the preparation states and between measurement operators: $\tilde{p}(b|a,x,y)=\tr(\tilde{\rho}^x_a \tilde{M}^y_b)=\tr(\tilde{\rho}^x_a \tilde{\rho}^y_b)=\tr(\tilde{M}^x_a \tilde{M}^y_b)$. Therefore, if the experimental statistics $\tilde{p}(b|a,x,y)$ match the target statistics $p(b|a,x,y)$, 	we can certify that overlaps between experimental states match those of the target states. The same holds for the corresponding measurement operators.

Our method can be also applied when experimental statistics do not match exactly the target ones. 

\begin{theorem}[Robust SDI certification of overlaps]\label{th:overlaps}
Consider  pure target qudit preparation states $\rho^x_a$ and target projective measurements $\Mb^y$,  where $a\in [d]$ and $x,y\in [n]$.  Assume that $\rho^x_a = M^x_a$ for all $a,x$ and furthermore that experimental states $\tilde{\rho}^x_a$ and measurements $\Mbt^y$ act on Hilbert space of dimension at most $d$ and generate statistics $\tilde{p}(b|a,x,y)=\tr(\tilde{\rho}^x_a\tilde{M}^y_b)$ such that $|\tilde{p}(b|a,x,y)- \tr(\rho^x_a M^y_b)|\leq \epsilon$, for all  $a,b,x,y$.   Then, input states $\tilde{\rho}^x_a$ are almost pure and measurements  $\Mbt^y$ are almost projective in the sense that 
\beq
\label{eq:purity}
\text{for all $x$\ \     } \sum_{a=1}^{d}\norm{\tilde{\rho}^x_a} \geq d(1-2\epsilon)\ ,\\
\text{for all $y$\ \    }\sum_{b=1}^d\norm{\tilde{M}^y_b} \geq d(1-\epsilon)\  \label{eq:projectivness} .
\eeq
Moreover, for all $x\neq x'$, $a\neq a'$, $y\neq y'$, and $b\neq b'$,  we have
\beq \label{eq:overlaps}
&&|\tr(\tilde{\rho}^x_a\tilde{\rho}^{x'}_{a'})-\tr(\rho^x_a\rho^{x'}_{a'})|\leq \epsilon+\sqrt{2\epsilon+d^2\epsilon^2},\\
&&|\tr(\tilde{M}^y_b\tilde{M}^{y'}_{b'})-\tr(M^y_bM^{y'}_{b'})|\leq \epsilon+(1+d\epsilon)\sqrt{2\epsilon+d^2\epsilon^2}\ . \nonumber
\eeq
\end{theorem}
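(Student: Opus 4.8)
The plan is to extract, from the near-perfect correlations at $y=x$, the fact that the experimental states and measurement effects are close to pure/projective, and then to run the idealized argument of the preceding paragraph (that $\tilde\rho^x_a$ and $\tilde M^x_a$ coincide, and that cross-terms equal state-state overlaps equal effect-effect overlaps) in an approximate, norm-quantitative form.

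First I would address \eqref{eq:purity} and \eqref{eq:projectivness}. For a fixed $x$, the hypothesis gives $\tr(\tilde\rho^x_a \tilde M^x_a)\geq \tr(\rho^x_a M^x_a)-\epsilon = 1-\epsilon$. Since $\tilde M^x_a\leq \Id$ (it is a POVM effect), $\tr(\tilde\rho^x_a \tilde M^x_a)\leq \norm{\tilde M^x_a}$, which immediately yields $\sum_b\norm{\tilde M^x_b}\geq \sum_b \tr(\tilde\rho^x_b\tilde M^x_b)\geq d(1-\epsilon)$, i.e.\ \eqref{eq:projectivness}. For the states, $\tr(\tilde\rho^x_a\tilde M^x_a)\leq \norm{\tilde\rho^x_a}\,\tr(\tilde M^x_a)$; summing over $a$, using $\sum_a \tilde M^x_a=\Id$ so $\sum_a\tr(\tilde M^x_a)=d$, together with a convexity/averaging estimate (or bounding $\tr(\tilde M^x_a)\le d$ and arguing via $\max_a \norm{\tilde\rho^x_a}\ge 1-\epsilon$ then propagating) gives \eqref{eq:purity}; I expect the factor $2$ there to come precisely from this step, so I would be careful to present the averaging cleanly rather than hand-wave it.

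Next, the key structural step: from $\tr(\tilde\rho^x_a\tilde M^x_a)\geq 1-\epsilon$ with $\tilde\rho^x_a$ a state and $\tilde M^x_a$ a sub-identity effect, deduce that $\tilde\rho^x_a$ is close to $\tilde M^x_a$ in an operationally useful norm. Concretely, I would show $\norm{(\Id-\tilde M^x_a)^{1/2}\sqrt{\tilde\rho^x_a}}_F^2 = \tr(\tilde\rho^x_a(\Id-\tilde M^x_a))\leq \epsilon$, and similarly control the ``off-diagonal'' pieces using $\sum_{a}\tilde M^x_a=\Id$: the operators $\{\tilde M^x_a\}_a$ are simultaneously near-projective in expectation on the states $\tilde\rho^x_a$. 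The cleanest route is to establish, for each $x$, a bound of the form $\tr|\tilde\rho^x_a-\tilde M^x_a|\leq \sqrt{2\epsilon+d^2\epsilon^2}$ (or the stated $\sqrt{2\epsilon+d^2\epsilon^2}$-type quantity), probably via $\tr(\tilde\rho^x_a\sum_{b\ne a}\tilde M^x_b)\le \epsilon$, then $\tr((\tilde M^x_a)^2)\ge$ something close to $1$ from $\sum_b\norm{\tilde M^x_b}$ being large and $\sum_b\tilde M^x_b=\Id$, followed by a Cauchy--Schwarz/Fuchs--van de Graaf-style passage from fidelity-like quantities to trace distance. This comparison lemma is where the $d^2\epsilon^2$ term is born, and I expect it to be the main obstacle: one must bound how far $\tilde M^x_a$ is from a genuine projector using only that the effects sum to identity and each nearly saturates $\norm{\cdot}\le 1$ on average, which is a global constraint and needs the dimension $d$ to convert an $\ell^1$-type defect into an operator-level defect.

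Finally, I would assemble \eqref{eq:overlaps}. Start from $|\tr(\tilde\rho^x_a\tilde M^{y}_{b})-\tr(\rho^x_a M^y_b)|\le\epsilon$ (the hypothesis, using $\rho^y_b=M^y_b$ so the target overlap is $\tr(\rho^x_a\rho^y_{b'})$ up to relabeling). Replace $\tilde M^{x'}_{a'}$ by $\tilde\rho^{x'}_{a'}$ using the comparison lemma: $|\tr(\tilde\rho^x_a\tilde M^{x'}_{a'})-\tr(\tilde\rho^x_a\tilde\rho^{x'}_{a'})|\le \norm{\tilde\rho^x_a}\cdot\tr|\tilde M^{x'}_{a'}-\tilde\rho^{x'}_{a'}|\le 1\cdot\sqrt{2\epsilon+d^2\epsilon^2}$, giving the first line of \eqref{eq:overlaps} by the triangle inequality. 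For the second line, instead replace $\tilde\rho^x_a$ by $\tilde M^x_a$ on the left, at the same cost, and then note that the extra factor $(1+d\epsilon)$ arises because $\norm{\tilde M^x_a}$ is only known to be $\le 1$ in general but is here bounded by $1+d\epsilon$ (or because one picks up an extra $\tr(\tilde M^x_a)\le d$ against $\epsilon$) — I would track this constant carefully to land exactly on $\epsilon+(1+d\epsilon)\sqrt{2\epsilon+d^2\epsilon^2}$. Throughout, the only nontrivial ingredients are the POVM normalization $\sum_b\tilde M^y_b=\Id$, positivity, and the elementary inequality $|\tr(AB)|\le\norm{A}\,\tr|B|$ for $A$ self-adjoint and $B$ trace-class; everything else is bookkeeping.
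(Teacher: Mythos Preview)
Your high-level strategy matches the paper's: establish near-projectivity from $\tr(\tilde\rho^x_a\tilde M^x_a)\ge 1-\epsilon$, then quantify $\tilde\rho^x_a\approx\tilde M^x_a$, then swap $\tilde M$ for $\tilde\rho$ in the cross-statistics. Your argument for \eqref{eq:projectivness} is exactly the paper's. However, several of your concrete steps either miss the mark or would yield worse constants than claimed.

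The central tool you are missing is the \emph{partial spectral decomposition} the paper uses: write $\tilde M^x_a=\lambda_a\proj{\phi_a}+\mathrm{Rest}_a$ with $\lambda_a=\norm{\tilde M^x_a}$ and $\mathrm{Rest}_a\ge 0$ supported off $\ket{\phi_a}$. From $\sum_a\tilde M^x_a=\Id$ one gets $\sum_a\tr(\mathrm{Rest}_a)=d-\sum_a\lambda_a\le d\epsilon$, hence $\tr(\mathrm{Rest}_a)\le d\epsilon$ for every $a$. This single device does all the work you are groping for: (i) it gives \eqref{eq:purity} via $1-\epsilon\le\lambda_a\bra{\phi_a}\tilde\rho^x_a\ket{\phi_a}+\tr(\tilde\rho^x_a\mathrm{Rest}_a)\le\norm{\tilde\rho^x_a}+\tr(\mathrm{Rest}_a)$ and summing; (ii) it yields $\tr((\tilde M^x_a)^2)=\lambda_a^2+\tr(\mathrm{Rest}_a^2)\le 1+d^2\epsilon^2$, which is exactly where the $d^2\epsilon^2$ is born; (iii) it gives $\tr(\tilde M^x_a)\le 1+d\epsilon$, which is the correct source of the $(1+d\epsilon)$ factor in the second line of \eqref{eq:overlaps}.

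Without this, your proposed steps run into trouble. Your route to \eqref{eq:purity} via $\tr(\tilde\rho^x_a\tilde M^x_a)\le\norm{\tilde\rho^x_a}\tr(\tilde M^x_a)$ and averaging only yields $\sum_a\norm{\tilde\rho^x_a}\ge d(1-\epsilon)/(1+(d-1)\epsilon)\approx d(1-d\epsilon)$, not $d(1-2\epsilon)$. For the comparison lemma you say you need $\tr((\tilde M^x_a)^2)\ge\text{(close to 1)}$, but the Frobenius computation $\norm{\tilde\rho^x_a-\tilde M^x_a}_F^2=\tr((\tilde\rho^x_a)^2)+\tr((\tilde M^x_a)^2)-2\tr(\tilde\rho^x_a\tilde M^x_a)$ requires an \emph{upper} bound on $\tr((\tilde M^x_a)^2)$; the direction is reversed. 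Also, you propose controlling $\tr|\tilde M^x_a-\tilde\rho^x_a|$ and pairing it with $\norm{\tilde\rho^{x'}_{a'}}\le 1$; the paper instead bounds the \emph{operator} norm $\norm{\tilde\rho^x_a-\tilde M^x_a}\le\norm{\tilde\rho^x_a-\tilde M^x_a}_F$ and pairs it with $\tr\tilde\rho^{x'}_{a'}=1$ (respectively $\tr\tilde M^{x'}_{a'}\le 1+d\epsilon$), avoiding the extra $\sqrt{d}$ you would pick up going from Frobenius to trace norm. Finally, your explanation of the $(1+d\epsilon)$ factor as ``$\norm{\tilde M^x_a}\le 1+d\epsilon$'' cannot be right since POVM effects always satisfy $\norm{\tilde M^x_a}\le 1$; the factor is the trace norm $\tr(\tilde M^x_a)\le 1+d\epsilon$, exactly as supplied by the decomposition above.
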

The parameter $\epsilon$ accounts for all possible experimental errors, including the finite statistics' error that can be estimated using standard tools of statistical analysis.
\begin{proof}
We start with a straightforward proof of Eq.\eqref{eq:projectivness}. Since $\norm{\tilde{M}^x_a}\geq \tr(\tilde{M}^x_a\rho)$ for any state $\rho$, the relation $\tr(\tilde{\rho}^x_a\tilde{M}^x_a)\geq 1-\epsilon$, valid for all $a,x$, implies $\norm{\tilde{M}^x_a}\geq 1-\epsilon$, for all  $a,x$, and hence for all $x$ we have $\sum_{a=1}^{d}\norm{\tilde{M}^x_a} \geq d-d\epsilon$.

To prove the bounds on the norms of experimental states in Eq. \eqref{eq:purity}, we fix a setting $x$ and use a decomposition: 
\begin{equation}\label{eq:partSPEC}
\tilde{M}^x_a = \lambda_a\proj{\phi_a}+\text{Rest}_a\ ,
\end{equation}
where $\lambda_a$ is the largest eigenvalue of $\tilde{M}^x_a$, $\proj{\phi_a}$ is the corresponding eigenvector and,  $\text{Rest}_a$ is a positive-semi-definite  operator satisfying  $\bra{\phi_a}\text{Rest}_a\ket{\phi_a} = 0$ (to keep the derivations legible, we omit the index $x$). Using the fact that observed statistics are $\epsilon$ close to the target ones we get:
\beq
1-\epsilon\leq \tr(\tilde{\rho}^x_a\tilde{M}^x_a) = \lambda_a\bra{\phi_a}\tilde{\rho}^x_a\ket{\phi_a}+\tr(\tilde{\rho}^x_a\text{Rest}_a)\leq \norm{\tilde{\rho}^x_a}+\tr(\text{Rest}_a)\ . 
\eeq
By taking the sum over $a$ we obtain $\sum_{a=1}^d\norm{\tilde{\rho}^x_a}\geq d(1-\epsilon)-\sum_{a=1}^d\tr(\text{Rest}_a)$. We can now use the identity $d = \sum_{a=1}^d\tr(\tilde{M}^x_a)$, which follows form that fact that operators $M^x_a$ form a POVM in $\Cl^d$. This equality allows us to give a bound: 
\begin{equation}
  d=   \sum_{a=1}^d\lambda_a+\sum_{a=1}^d\tr(\text{Rest}_a)\leq d(1-\epsilon)+\sum_{a=1}^d\tr(\text{Rest}_a) \ ,
\end{equation}
which is equivalent to $\sum_{a=1}^d\tr(\text{Rest}_a)\leq d\epsilon$. Inserting this to $\sum_{a=1}^d\norm{\tilde{\rho}^x_a}\geq d(1-\epsilon)-\sum_{a=1}^d\tr(\text{Rest}_a)$ we obtain Eq.~\eqref{eq:purity}.

We now proceed to the proof of Eqs.~\eqref{eq:overlaps}. We start with the one for the overlaps between preparation states. The proof is given by the following sequence of inequalities which hold for every $a\neq a'$, $x\neq x'$:
\beq
\label{eq:app_proof_overlaps}
&&|\tr(\tilde{\rho}^x_a\tilde{\rho}^{x'}_{a'}) - \tr(\rho^x_a\rho^{x'}_{a'})|\leq |\tr(\tilde{\rho}^x_a\tilde{\rho}^{x'}_{a'}) - \tr(\tilde{M}^x_a\tilde{\rho}^{x'}_{a'})|+ |\tr(\tilde{M}^x_a\tilde{\rho}^{x'}_{a'}) - \tr(M^x_a\rho^{x'}_{a'})|\nonumber\\
&&\leq \norm{\tilde{\rho}^x_a-\tilde{M}^x_a}+\epsilon \leq \norm{\tilde{\rho}^x_a-\tilde{M}^x_a}_F+\epsilon= \sqrt{\tr((\tilde{\rho}^x_a)^2)+\tr((\tilde{M}^x_a)^2)-2\tr(\tilde{\rho}^x_a\tilde{M}^x_a)}+\epsilon\nonumber\\
&&\leq \sqrt{1+(1+d^2\epsilon^2)-2(1-\epsilon)} = \epsilon+\sqrt{2\epsilon+d^2\epsilon^2}.\nonumber
\eeq
All of the above inequalities are pretty straightforward apart from $\tr((\tilde{M}^x_a)^2)\leq 1+d^2\epsilon^2$ that we prove below. For this we again use the partial spectral decomposition form Eq.~\eqref{eq:partSPEC} (without writing the superscript $x$ as above) which implies:
\beq
\tr((\tilde{M}^x_a)^2) = \lambda_a^2+\tr(\text{Rest}_a^2)\leq 1+(\tr(\text{Rest}_a))^2\leq 1+d^2\epsilon^2.
\eeq
The proof for the overlaps of POVM effects is very similar to the one for states with the only difference being the following inequality:
\beq
|\tr(\tilde{M}^x_a\tilde{M}^{x'}_{a'}) - \tr(\tilde{M}^x_a\tilde{\rho}^{x'}_{a'})| \leq (1+d\epsilon)\norm{\tilde{\rho}^x_a-\tilde{M}^x_a},
\eeq
that is used in the second step in the proof in Eq.~(\ref{eq:app_proof_overlaps}). One can easily verify the validity of this inequality by writing once more the decomposition $\tilde{M}^{x'}_{a'} = \lambda_a\proj{\phi_{a'}}+\text{Rest}_{a'}$ and remembering that $\tr(\text{Rest}_a)\leq d\epsilon$.
\end{proof}

The above result states that if the statistics observed  in our certification scheme vary just a little bit form the target ones, the overlaps of the experimental states are also close to the overlaps between target states (and analogously for measurements). To our best knowledge analogous results have been previously known only for very special symmetric target states and measurements forming MUBs \cite{tavakoli2018self,farkas2019self}. 

In Appendix~\ref{app:rob} (Lemma~\ref{lemma:overlaps}) we improve the above bounds for the case of qubit systems.
Moreover, in Appendix~\ref{app:remark} we prove, by giving explicit examples, that the bounds in Eq.~\eqref{eq:overlaps} are tight in the first orders in $\sqrt{\epsilon}$ and $d$. 

\section{Self-testing of qubits} We now show that certification of overlaps allows to prove robust self-testing result for \emph{arbitrary} pure qubit preparations and projective measurements appearing in our certification scheme. 

\begin{theorem}[Robust self-testing of qubit systems]
\label{th:rob}
Consider target pure qubit states $\rho^x_a$ and projective measurements  $\Mb^y$, where $a=1,2$, $x,y\in  [n]$, and  $\rho^x_a = M^x_a$ for all $a,x$. Assume that experimental qubit states $\tilde{\rho}^x_a$ and measurements $\Mbt^y$ generate statistics $\tilde{p}(b|a,x,y)=\mathrm{tr}(\tilde{\rho}^x_a\tilde{M}^y_b)$ such that $|\tilde{p}(b|a,x,y)- \tr(\rho^x_a M^y_b)|\leq \epsilon$, for all $a,b=1,2$ and $x,y\in[n]$. Then, there exist $\epsilon_0$ such that for $\epsilon\leq \epsilon_0$  there exist a qubit unitary $U$ such that 
\beq
\label{eq:th_rob}
\frac{1}{2n} \sum_
{a,x}\tr(U(\tilde{\rho}^x_a)^{(T)} U^\dagger\rho^x_a) \geq 1 - f(\epsilon)\ , \nonumber\\
\frac{1}{2n} \sum_
{b,y}  \tr(U(\tilde{M}^y_b)^{(T)} U^\dagger M^y_b) \geq 1 -g(\epsilon) \ ,
\eeq
where $(\cdot)^{(T)}$ is a potential transposition with respect to a fixed basis in $\Cl^2$ that may have to be applied to all experimental states and measurements at the same time. Moreover, functions $f,g:[0,\epsilon_0) \rightarrow \Rl_+$ depend solely on the target states and measurements and, for small $\epsilon$, have the asymptotics $f(\epsilon)\propto \epsilon,\ g(\epsilon)\propto \epsilon$. 
\end{theorem}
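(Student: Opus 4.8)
The plan is to reduce \eqref{eq:th_rob} to a robust Procrustes-type statement about configurations of unit vectors on the Bloch sphere. First I would apply Theorem~\ref{th:overlaps} with $d=2$ (sharpened, if better constants are wanted, by Lemma~\ref{lemma:overlaps}): Eqs.~\eqref{eq:purity}--\eqref{eq:projectivness} force $\norm{\tilde\rho^x_a}=1-O(\epsilon)$ and $\norm{\tilde M^y_b}=1-O(\epsilon)$, so $\tilde\rho^x_a$ is $O(\epsilon)$-close in trace norm to its dominant eigenprojector $\proj{\tilde\psi^x_a}$, and $\Mbt^y$ is $O(\epsilon)$-close to the projective measurement $\{\proj{\tilde\phi^y_1},\Id-\proj{\tilde\phi^y_1}\}$ built from the top eigenvector of $\tilde M^y_1$. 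Replacing the experimental objects by these genuine pure states and projective measurements perturbs each term of \eqref{eq:th_rob}, and each overlap, only by $O(\epsilon)$, so it suffices to prove the statement for them. Passing to Bloch vectors, write $\vn^x_a$ for the unit Bloch vector of $\rho^x_a$ and $\vnt^x_a$ for that of $\proj{\tilde\psi^x_a}$, and $\pm\vn^x$, $\pm\vnt^x$ for the two target and two experimental measurement directions; target projectivity gives $\vn^x_a=(-1)^{a+1}\vn^x$, and $\tr(\tilde\rho^x_a\tilde M^x_a)\ge 1-\epsilon$ gives $\vnt^x_a\cdot\vnt^x=(-1)^{a+1}+O(\epsilon)$, i.e.\ the experimental state directions align (within $O(\sqrt\epsilon)$ as vectors) with $\pm$ the experimental measurement directions.

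Next I would read off the geometric data. The state--measurement overlaps are \emph{directly} the statistics: $\tfrac12(1+\vnt^x_a\cdot\vnt^y)=\tr(\proj{\tilde\psi^x_a}\proj{\tilde\phi^y_1})=\tilde{p}(1|a,x,y)+O(\epsilon)$, while $\tilde{p}(1|a,x,y)$ lies within $\epsilon$ of $\tr(\rho^x_a M^y_1)=\tfrac12(1+\vn^x_a\cdot\vn^y)$, so $|\vnt^x_a\cdot\vnt^y-\vn^x_a\cdot\vn^y|=O(\epsilon)$ for all $x,a,y$. Combined with the antipodal alignment above, the state--state and measurement--measurement overlaps are preserved to $O(\sqrt\epsilon)$, hence the Gram matrix of the experimental configuration $\{\vnt^x_a\}\cup\{\pm\vnt^y\}\subset S^2$ is $O(\sqrt\epsilon)$-close to that of the target configuration $\{\vn^x_a\}\cup\{\pm\vn^y\}$, with the cross (state--measurement) block $O(\epsilon)$-close. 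The core of the proof is then to produce $O\in\mathrm{O}(3)$, with a constant depending only on the target configuration, such that $\sum_{x,a}(1-(O\vnt^x_a)\cdot\vn^x_a)+\sum_y(1-(O\vnt^y)\cdot\vn^y)=O(\epsilon)$. When the target Bloch vectors span $\Rl^3$ this follows from the quantitative ``a configuration in $\Rl^3$ is determined up to $\mathrm{O}(3)$ by its Gram matrix'' estimate (alignment error squared $\lesssim$ Gram discrepancy squared, and $(\sqrt\epsilon)^2=\epsilon$), the constant governed by how far the target configuration is from degenerate. When the target spans only a plane or a line, the surplus orthogonal gauge freedom feeds into the directly probed overlaps and into the target fidelities only at second order, so the $O(\epsilon)$-precision of the former caps it and hence its contribution to the sum at $O(\epsilon)$; and for target overlaps equal to $\pm1$ the triangle inequality for geodesic distance on $S^2$ squeezes the relevant experimental vectors into an $O(\sqrt\epsilon)$-cluster that also contributes only $O(\epsilon)$. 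One then fixes $\epsilon_0$ small enough that all these estimates hold and that the sign of $\det O$ is unambiguous.

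Finally I would translate back to operators. If $\det O=1$, take the qubit unitary $U$ with $U\sigma_i U^\dagger=\sum_j O_{ij}\sigma_j$ and no transposition; if $\det O=-1$, use that transposition in the fixed basis flips the middle Bloch component, so $U(\cdot)^{(T)}U^\dagger$ with $U$ implementing the rotation $O\,\mathrm{diag}(1,-1,1)\in\mathrm{SO}(3)$ realizes $O$. Then $\tr(U(\proj{\tilde\psi^x_a})^{(T)}U^\dagger\rho^x_a)=\tfrac12(1+(O\vnt^x_a)\cdot\vn^x_a)$, and both effects of $\Mb^y$ give $\tfrac12(1+(O\vnt^y)\cdot\vn^y)$; averaging over $a,x$ (resp.\ $b,y$) and reinstating the $O(\epsilon)$ purity/projectivity corrections from the first step yields \eqref{eq:th_rob} with $f(\epsilon),g(\epsilon)=O(\epsilon)$. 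The linear order cannot be improved, since already a genuinely mixed preparation with $\norm{\tilde\rho^x_a}=1-4\epsilon$ caps $\tr(U\tilde\rho^x_a U^\dagger\rho^x_a)$ at $1-4\epsilon$.

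The hard part is the reconstruction step. The observed statistics pin down only the state--measurement block of the Gram matrix to $O(\epsilon)$ and the remaining blocks only to $O(\sqrt\epsilon)$, so one must push this asymmetric precision through a quantitative rigidity argument for vector configurations in $\Rl^3$, and one must separately treat targets whose Bloch vectors do not span $\Rl^3$, where the orthogonal gauge is larger than three-dimensional and one has to verify that the extra gauge freedom never couples into the fidelities in \eqref{eq:th_rob}. This is exactly why $f$, $g$ and $\epsilon_0$ are allowed to depend on the target configuration. A secondary, purely bookkeeping, issue is carrying the single global transposition consistently through all states and measurements --- the familiar transposition ambiguity of self-testing in the prepare-and-measure setting.
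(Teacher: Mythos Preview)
Your proposal is correct and follows a Procrustes-type route --- this is in fact one of the two approaches the paper itself develops, presented as Lemma~\ref{lemma:procrust} in Appendix~\ref{app:procrust} and used for the trine and tetrahedron rows of Table~\ref{tab:examples}. The paper's \emph{primary} proof (Theorems~\ref{th:rob_quant_fid} and~\ref{th:rob_quant} in Appendix~\ref{app:rob}) instead goes via stability of the Cholesky factorisation: one selects a maximal linearly independent subset $S$ of target Bloch vectors, forms the $|S|\times|S|$ Gram matrices $\Gamma_S,\tilde\Gamma_S$, and invokes Sun's perturbation theorem to bound $\norm{\Delta L}_F$, which directly controls the average fidelity for the states in $S$; the remaining states are handled by expanding in the basis $S$ together with a standard stability bound for linear systems (Higham), and the measurements are reached through $\norm{\tilde\rho^x_a-\tilde M^x_a}\leq\sqrt{\epsilon}$ from Lemma~\ref{lemma:overlaps} rather than by enlarging the Procrustes configuration as you do. The paper also does \emph{not} pre-purify: it works with the Bloch vectors of the possibly mixed $\tilde\rho^x_a$ throughout, absorbing the purity defect from Lemma~\ref{lemma:overlaps} into the final fidelity estimate. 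Your purification step is a legitimate shortcut costing only $O(\epsilon)$, but it is worth noting that the paper's main line avoids it. The Cholesky route buys fully explicit constants at the price of a less symmetric argument (a privileged subset $S$, then the rest); your Procrustes route is more uniform but the constants are harder to track, and --- as you yourself flag --- the rank-deficient target case needs separate care. The paper handles $k=2$ by truncating to the two-dimensional subspace rather than by the second-order gauge argument you sketch.
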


\noindent In the above we used the fidelity $F(\rho,\sigma) = \tr(\rho \sigma)$ to indicate the closeness between \emph{rotated} experimental states and target pure states (and analogously for measurements), following existing literature. The case $\epsilon =0$ corresponds to ideal reconstruction of target states and measurements after applying a suitable isometry.  
We remark that we allow only unitary operations (and possible transposition), as opposed to general channels \cite{tavakoli2018self,tavakoli2018self2}, to be applied to the experimental states in order to approximate the target states \emph{as well as possible}.

In Appendix~\ref{app:rob} we give a formal version of the above result and its proof. Moreover, we present there robustness bounds expressed in terms of the trace distance and its analogue for measurements~\cite{NavasquesMES}.  We remark that the functions $f,g$ become unbounded once Bloch vectors of target qubit states become singular, i.e.,~once the target states are close to being aligned in a space of smaller dimension.

\begin{remark}
Certification of overlaps between pure states in general does not allow for their  self-testing in higher-dimensional systems.  This is e.g.,~due to the existence of unitary inequivalent sets of SIC-POVMs for $d=3$ \cite{SICpaper2004} and MUBs for $d=4$~\cite{brierley2009mutually} (even if we allow for complex conjugation).  
\end{remark}

\begin{proof}[Sketch of the proof] We give a full proof for the ideal case ($\epsilon=0$) below. From Theorem~\ref{th:overlaps} it follows that for all $x,x',a,a'$ we have $\tilde{\rho^x_a} = \tilde{M}^x_a$ and $\tr(\tilde{\rho}^x_a\tilde{\rho}^{x'}_{a'}) = \tr(\rho^x_a\rho^{x'}_{a'})$. Using the Bloch representation, $\rho^x_a = \frac{1}{2}(\Id + \mathbf{n}^x_a\cdot\bm{\sigma})$, we can conclude that also $\mathbf{\tilde{n}}^x_a\cdot\mathbf{\tilde{n}}^{x'}_{a'} =  \mathbf{n}^x_a\cdot\mathbf{n}^{x'}_{a'}$, where $\vn^x_a$ and $\tilde{\vn}^x_a$ are Bloch vectors of  $\rho^x_a$ and $\tilde{\rho}^x_a$ respectively. 

Assume now that the vectors $\vn^1_1,\vn^2_1,\vn^3_1$ are linear independent. Let $O$ be the linear transformation defined by  $O\mathbf{n}^x_1 = \mathbf{\tilde{n}}^x_1$, $x=1,2,3$, and let $L$ be the matrix whose rows are the vectors $\vn^x_1$, $x=1,2,3$. Then, we have $LO^TOL^T = LL^T$, and consequently, since $L$ is invertible by the construction, $O^TO = \Id_3$, i.e.,~$O$ is an orthogonal transformation in $\Rl^3$. It is well-known~\cite{NielsenBook} that if $\det(O) = 1$, there exist a unitary matrix $U$  such that $\tr(U(\tilde{\rho}^x_a)U^\dagger\rho^x_a)=1$ for $x=1,2,3$ and $a=1$. By our assumption all remaining states $\rho^x_a$ can be decomposed in the basis $\{\Id,\rho^1_1,\rho^2_1,\rho^3_1\}$, with the coefficients depending solely on the overlaps $\tr(\rho^x_a\rho^{x'}_{a'})$. The, the same $U$ that maps $\tilde{\rho}^x_1$ to $\rho^x_1$, for $x=1,2,3$, also connects the remaining pairs of states. Finally, if $\det(O) = -1$, the transformation $O$ corresponds to application of the transposition in the standard basis of $\Cl^2$ followed by application of a unitary operation $U$, determined by $O$~\cite{BergmannWigner1964}. 

For the general case $\epsilon>0$ we consider Cholesky factorisations of Gram matrices, denoted by $\Gamma$ and $\tilde{\Gamma}$, of Bloch vectors $\vn^1_1,\vn^2_1,\vn^3_1$ and their experimental counterparts respectively. Theorem~\ref{th:overlaps} is then used to bound $\norm{\Gamma-\tilde{\Gamma}}_F$ and utilize results of Ref.~\cite{sun1991perturbation} to gauge how much the Cholesky decompositions of $\Gamma$ and $\tilde{\Gamma}$ differ in the Frobenius norm. The latter can be connected to the average fidelity  between the selected target and experimental states. The robustness for the remaining states follows from the fact that they can be decomposed (as operators) using the three initially chosen target states and the identity. 

If vectors from the set $\lbrace{ \vn^x_a\rbrace}$ span two dimensional space then the same arguments for both $\epsilon=0$ and $\epsilon>0$ can be repeated for the considered subspace. Importantly, in this case additional transposition is not necessary.
\end{proof}

\begin{figure}[t!] \centering
    \includegraphics[width=.95\textwidth]{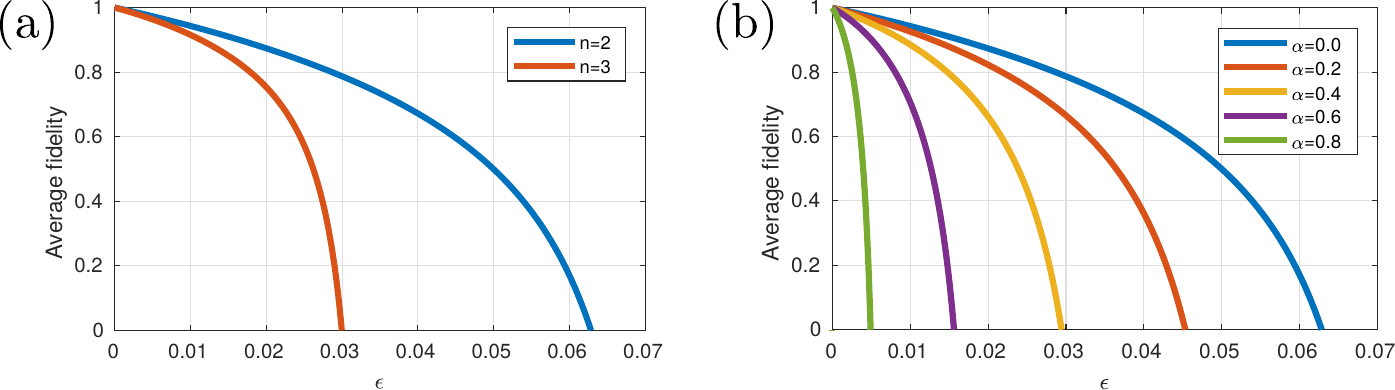}\caption{Lower bounds for the average fidelity between the experimental and target states. Part (a) presents results for $n=2,3$ qubit MUBs. Part (b) presents results for two qubit bases for different degree of bias $\alpha$. Value $\alpha=0$ corresponds to two MUBs while $\alpha=1$ gives two identical bases.}
	\label{fig:ex}
\end{figure}

\section{Examples} We now apply the quantitative formulation of Theorem \ref{th:rob} to lower-bound average fidelities for different configurations of target states as a function of the allowed error $\epsilon$. In Figure~\ref{fig:ex} we present results for  eigenstates of $n=2$ and $n=3$ Pauli matrices (i.e.,~states forming $n=2$ and $n=3$  qubit MUBs), and states belonging to two biased projective measurements satisfying $\tr{(\rho^1_a\rho^2_a)} = \frac{1+\alpha}{2}$, $a=1,2$ where we take $\alpha\in[0,1]$. 
%Clearly, the case of $\alpha\in[-1,0]$ is equivalent to $\alpha\in[0,1]$. 

For $n=2$ MUBs we compare our results with Ref.~\cite{tavakoli2018self} that focuses on self-testing of qubit MUBs. The results of Ref.~\cite{tavakoli2018self} give the upper bound on average fidelity equal $0.75$ for the deviation of $\simeq 0.1$ in the figure of merit. In our scheme this happens for $\epsilon\simeq 0.033$ as shown on Fig.~\ref{fig:ex}. To test the versatility of our scheme, we also applied it to $n=3$ MUBs, trine and tetrahedral configurations of qubit states. Quantitative results concerning these examples are listed in Table~\ref{tab:examples}, while detailed derivations are
given in Appendix~\ref{app:ex}. For trine and tetrahedral configurations the robustness is obtained via the so-called Procrustes problem, described in Appendix~\ref{app:procrust}, based on Ref.~\cite{arias2020perturbation}. For cases other than two MUBs we cannot make any comparison with the existing literature  since, to our best knowledge, these case have not been studied previously.

\begin{table}\centering
\begin{tabular}{|c|c|c|}
\hline
Configuration      & $\epsilon_0$ & $C$  \\ \hline
2 MUBs      &      $\approx 0.062$        &   $\frac{7}{2}+\sqrt{2}$     \\ \hline
3 MUBs      &      $\approx 0.030$        &  $6$       \\ \hline
2 biased bases   & $\lesssim \frac{4-3\alpha-\sqrt{7-6\alpha}}{18}$   &  $ 2+\left(1 + \frac{\sqrt{1+\alpha}}{\sqrt{2}(1-\alpha)}\right)^2$   \\ \hline
Trine       &   $\approx 0.058$    &  $\frac{19}{3}$ \\ \hline
Tetrahedron &    $\approx 0.037$       & $10$ \\ \hline
\end{tabular}

\caption{\label{tab:examples} Results of quantitative variant of Theorem~\ref{th:rob} applied to different configurations of target quantum states. The threshold $\epsilon_0$ sets the maximal noise level which is tolerated by our scheme. The constant $C$ is defined via the relation $f(\epsilon) \stackrel{\epsilon\rightarrow 0}{\approx} C\epsilon$, where $1-f$ is the lower bound on the average fidelity from Eq. \eqref{eq:th_rob}. 
}
\end{table} 

 \section{Shared randomness} Throughout the article we assumed that preparation and measurement devices are uncorrelated, which is often true in practice. However, one may also consider a situation in which Alice and Bob share a random variable $\lambda$, that they can use to coordinate their actions in a given round of the experiment. The most general statistics that can be generated in such a scenario can be expressed as  $p(b|a,x,y)= \int d\lambda p(\lambda)   \tr(\rho^{x}_{a}(\lambda) M^{y}_{b}(\lambda))$, where $p(\lambda)$ denotes the probability distribution of $\lambda$.

Interestingly, the presence of shared randomness makes our main results, as stated by Theorems~\ref{th:overlaps} and~\ref{th:rob}, inapplicable. The easiest way to see it is to consider the following simple example of $n=d=2$, one bit of shared randomness $\lambda \in \{1,2\}$, and $\tilde{\rho}^{1,2}_a = \proj{a}$, $\tilde{M}^{1,2}_b = \proj{b}$. This clearly satisfies the requirement on $\tilde{p}(a|a,x,x)=1$. Now, Alice's and Bob's devices can decide to ``flip" their preparations and measurement operators whenever $x=2$, $y=2$ and $\lambda = 2$ (note that $\lambda$ can be distributed in an arbitrary way). This procedure does not affect the correlations $\tilde{p}(a|a,x,y=x)$, but it can be used to set $\tilde{p}(a|a,1,2)$ to an arbitrary value. 

Additionally, we believe that in the presence of shared randomness, the usual notion of self-testing in the SDI setting has to be reconsidered. Specifically, for \emph{any} quantum realisation giving the statistics $p(b|a,x,y)= \tr(\rho^x_a M^y_b)$, one can always consider a strategy in which  with probability $p(\lambda)$ Alice prepares  states $\rho^x_a (\lambda) =U_\lambda \rho^x_a U^\dagger_\lambda$ and Bob implements measurements  $M^{y}_{b}(\lambda)= U_\lambda M^{y}_{b} U^\dagger_\lambda$, where $U_\lambda$ is some arbitrary unitary transformation depending on $\lambda$. Clearly, such strategy reproduces the original statistics, and makes it impossible to find a single unitary that connects the target and experimental states (or measurements).

A very similar argument as above can be applied to situations of non-identical preparations of states and non-identical measurements across different runs of the experiment. Indeed, the value of a given run of the experiment is accessible to both Alice and Bob and can be considered a shared random variable $\lambda$ that controls both Alice's and Bob's devices. Once again, the existence of a single unitary that connects the target and experimental states can never be guaranteed.

While, we do not consider the assumption on no shared randomness strong, below we propose a modification of our SDI certification scheme that allows to certify overlaps between arbitrary pure states even in the presence of shared randomness. The idea is to introduce, for every pair of non-orthogonal states, a suitable \emph{intermediate state}~\cite{Gisin2003}, that enforces fixed overlaps between experimental states in every round of the experiment. 

Recall that in our certification scheme we consider pure target qudit states $\rho^x_a$ and target projective measurements $\Mb^y$,  where $a\in [d]$ and $x,y\in [n]$. We extend this scheme by introducing an additional intermediate target state $\rho_z$ for every two pairs of Alice's input  $(x,a)$ , $(x',a')$, where $a\neq a'$ and $x\neq x'$. The state $\rho_z$ is chosen as the unique state satisfying: 
\begin{equation}\label{eq:intermediateSTATE}
    \tr\left(\rho_z (\rho^x_a +\rho^{x'}_{a'}) \right)=1+\sqrt{\tr(\rho^x_a \rho^{x'}_{a'})} \ .
\end{equation}
Let $\tilde{p}(b|z,y)$ denote the experimental statistics, corresponding to inputs $z,y$ and outcome $b$.  In the ideal scenario, in which experimental statistics satisfy assumptions of Theorem~\ref{th:overlaps} with $\epsilon=0$, assume additionally that $\tilde{p}(a|z,x)+\tilde{p}(a'|z,x')=1+\sqrt{\tilde{p}(a'|x,a,x')}\ .$
Below we prove that under the above assumptions for all $\lambda$ we have $\tr(\tilde{\rho}^x_a(\lambda) \tilde{\rho}^{x'}_{a'} (\lambda) )=\tr(\rho^x_a \rho^{x'}_{a'})$. Since in the ideal case overlaps between preparation states  \emph{do not depend} on $\lambda$ and match the target value, our modified certification scheme works also in the presence of shared randomness. In the noisy scenario one can generally expect to be able to upper-bound the probability (over the choice of $\lambda$) of the deviations from perfect correlations between states and measurements, as a function of the error $\epsilon$.  We postpone the discussion of this interesting point to the future work.

\begin{proof}[Proof of soundness of certification in the presence of shared randomness]
Consider a general physical realisation of experimental statistics $\tilde{p}(b|x,a,y)$, $\tilde{p}(b|z,y)$ via classically correlated preparations and measurements on $\Cl^d$:
\begin{align}
   \tilde{p}(b|x,a,y)&= \int d\lambda p(\lambda) \tr(\tilde{\rho}^x_a (\lambda) \tilde{M}^y_b (\lambda) )\ , \\
    \tilde{p}(b|z,y)&= \int d\lambda p(\lambda) \tr(\tilde{\rho}_z (\lambda) \tilde{M}^y_b (\lambda) )\ ,
\end{align}
where $x,y\in[n]$, $a,b\in[d]$ and the variable (input) $z$ labels elements in the set of unordered pairs $\lbrace{(x,a),(x',a')\rbrace}$, where $x\neq x'$ and $a\neq a'$. Assume now that the above statistics match \emph{exactly} the ones required by our scheme and are compatible with Eq.~\eqref{eq:intermediateSTATE} in the sense that:
\beq
\tilde{p}(b|x,a,y) &=& \tr(\rho^x_a \rho^y_b)\ \label{eq:basic},\\ 
    \tilde{p}(a|z,x)+\tilde{p}(a'|z,x')&=&1+\sqrt{\tilde{p}(a'|x,a,x')}\ \label{eq:extra} .
\eeq
In what follows we prove that if the above constraints are satisfied, then for all\footnote{Technically speaking this equality holds for \emph{almost all} $\lambda$.}:
\beq
\tilde{\rho}^x_a(\lambda) &=&\tilde{M}^x_a (\lambda) \ ,\label{eq:keyEQUATION}   \\
 \tr(\rho^x_a \tilde{\rho}^{x'}_{a'})&=&\tr(\rho^x_a(\lambda) \tilde{\rho}^{x'}_{a'}(\lambda) )  \ ,\label{eq:keyEQUATION2}
\eeq
where $x,x'\in[n]$, $a,a'\in[d]$. The above equation means that the overlaps between preparation sates (and measurements) do not depend on on the value of the shared random variable $\lambda$.

The proof of Eq.~\eqref{eq:keyEQUATION} is straightforward. Namely, form Eq.~\eqref{eq:basic} if follows that:
\begin{equation}
    \int d\lambda p(\lambda) \tr\left(\tilde{\rho}^x_a(\lambda) \tilde{M}^x_a (\lambda)\right) =1\ .
    \end{equation}
Since $\tr(\tilde{\rho}^x_a(\lambda) \tilde{M}^x_a (\lambda))\leq 1$, we get that for all $\lambda$ we have $\tr(\tilde{\rho}^x_a(\lambda) \tilde{M}^x_a (\lambda))=1$. Since this reasoning can be repeated for all $a\in[d]$ (for the fixed value of $x$) and since operators $\tilde{M}^x_a$ form a POVM on $\Cl^d$ , we finally get Eq.~\eqref{eq:keyEQUATION}.

The proof of Eq.~\eqref{eq:keyEQUATION2} is more involved and relies on both Eqs.~\eqref{eq:basic} and~\eqref{eq:extra}. Specifically, from Eq.~\eqref{eq:basic} and the already established identity $\tilde{\rho}^x_a = \tilde{M}^x_a$,  it follows that Eq.~\eqref{eq:extra} is equivalent to:
\begin{equation}
\int d\lambda p(\lambda) \tr\left(\tilde{\rho}_z (\lambda) [\tilde{\rho}^x_a(\lambda) +\tilde{\rho}^{x'}_{a'}(\lambda)] \right) =1 +\sqrt{\tr(\rho^x_a \rho^{x'}_{a'})}\ ,
\end{equation}
where moreover:
\begin{equation}\label{eq:Average}
    \int d\lambda p(\lambda) \tr\left(\tilde{\rho}^x_a(\lambda) \tilde{\rho}^{x'}_{a'}(\lambda)\right)  = \tr(\rho^x_a \rho^{x'}_{a'}) \ . 
\end{equation}
Using Bloch representation of qubit states (states $\tilde{\rho}^x_a(\lambda),\tilde{\rho}^{x'}_{a'}(\lambda)$ are pure and hence span a two dimensional subspace of $\Cl^d$) it is straightforward to obtain the bound: 
\begin{equation}
    \tr\left(\tilde{\rho}_z (\lambda) [\tilde{\rho}^x_a(\lambda) +\tilde{\rho}^{x'}_{a'}(\lambda)] \right) \leq 1+\sqrt{\tr(\tilde{\rho}^x_a(\lambda)\tilde{\rho}^{x'}_{a'}(\lambda))}\ .  
\end{equation}
After setting $g(\lambda)=\sqrt{\tr(\tilde{\rho}^x_a(\lambda)\tilde{\rho}^{x'}_{a'}(\lambda))}$ and using Eq.~\eqref{eq:Average} we  obtain: 
\begin{equation}
    \int d \lambda p(\lambda) g(\lambda) \geq \sqrt{\int d\lambda p(\lambda) g(\lambda)^2}\ .
\end{equation}
Using Chauchy-Schwartz inequality for the left hand side of the above inequality we finally get: 
\begin{equation}
    \int d \lambda p(\lambda) g(\lambda) = \sqrt{\int d\lambda p(\lambda) g(\lambda)^2}\ ,
\end{equation}
which is equivalent to saying that the variance of the random variable $g(\lambda)$ vanishes. Therefore, $g(\lambda)=\alpha$, where $\alpha$ is some numerical constant. We conclude the proof by noting that from Eq.~\eqref{eq:Average} it follows that $\alpha^2 = \tr(\rho^x_a \rho^{x'}_{a'})$ which implies Eq.~\eqref{eq:keyEQUATION2}.
\end{proof}

\section{Extension of the scheme to SDI characterization of rank-1 non-projective measurements}
\label{app:povms}
In our work we focus on self-testing of arbitrary configurations of preparation states and projective measurements. In this section we discuss possible extension of our results to certification of properties of general rank-1 measurements in arbitrary dimension $d$. 
To that end, we introduce additional states coming from suitably-chosen orthonormal bases per every effect of the measurement we want to certify. With the help of these extra states (and corresponding projective measurements) we can e.g., certify the following properties of POVMs in arbitrary dimension $d$:  (i) information completeness, (ii) extremality of rank-1 POVMs,  and (iii) SIC-property.

The main idea of the extension is the following. Suppose we would like to self-test an $m$-outcome POVM $\mathbf{N} = (N_1,\dots,N_m)$ acting on a Hilbert space of dimension $d<m$ with each effect $N_b$ being of rank $1$ i.e., $N_b=\alpha_b \sigma_b$, for some pure states $\sigma_b$ and positive scalars $\alpha_b$. We  consider $m$ sets of preparation states $\{\rho^x_a\}$, $a\in [d]$, ${x\in[m]}$ and the corresponding projective measurements $\mathbf{M}^y$, $y\in [m]$. Assume now that in the idealized case we observe statistics satisfying: 
\beq
\label{eq:app_rank1povm}
&&\tr(\rho^b_a N_b) = 0,\quad \forall b\in [m],\, a\neq 1,\nonumber \\
&&\tr(\rho^x_a M^x_b) = \delta_{a,b},\quad \forall x\in [m]\ .
\eeq
From the second set of conditions we conclude that $\{\rho^x_a\}_a$ forms, as before, a basis of pure states for each $x$. From the first condition in Eq.~\eqref{eq:app_rank1povm} we get that $\mathbf{N}$ has to be a rank-1 POVM and, moreover, that $\rho^b_1 =\sigma_b$ ($N_b$ is orthogonal to all of the states except the one with $a=1$). Consequently, we have $ \tr(\rho^b_1 N_b)=\alpha_b$ for all $b\in[m]$.  This fact together with the measured overlaps $\tr(\rho^b_1\rho^{b'}_1) =\tr(\rho^b_1 M^{b'}_1) $ gives the information about the overlaps $\tr(N_b N_{b'})$ between effects of the POVM $\mathbf{N}$.

The knowledge of all overlaps $\tr(N_b N_{b'})$ (including $b'=b$) and the fact that $\mathbf{N}$ is a rank-1 POVM allow to certify many interesting properties of $\mathbf{N}$. First, as long as $m\leq d^2$ we can certify extremality of $\mathbf{N}$. This follows from the fact that rank-1 POVMs are extremal if and only if their effects are linearly independent (see e.g.,~Ref.~\cite{D_Ariano_2005}). Linear independence can be directly inferred from the experimentally accessible moment matrix $\Gamma_{b,b'} = \tr(N_b N_{b'})$. Specifically, $\Gamma$ is non-singular if and only if operators  $\{N_b\}_b$ are linear independent. Now, if $m=d^2$ we can use the same reasoning to infer information completeness of $\mathbf{N}$. Finally, from the moment matrix $\Gamma$ we can directly verify the SIC property i.e.,~the condition $\tr(N_bN_{b'}) = \frac{1+\delta_{b,b'}}{d^2(d+1)}, \forall b,b'$.

\section{Discussion} We have presented a systematic analytical scheme for noise-resilient certification of overlaps between arbitrary configurations of pure quantum states and rank-$1$ projective measurements in the prepare-and-measure scenario. For qubits our scheme can be used to robustly self-test general ensembles of pure quantum sates and the corresponding projective measurements.  We believe that these findings pave the way towards systematic certification of general quantum systems in the semi-device-independent paradigm. This is supported by the concrete  qubit results from Table~\ref{tab:examples} and by the universality of our protocol for certifying overlaps of quantum sates in arbitrary dimension.

In our main claims, we state that the scheme can be applied to an arbitrary set of preparation states. However, the sets of states in Theorems~\ref{th:overlaps} and~\ref{th:rob} always come in a collection of bases. These two statements are not contradicting one another since any given constellation of states can always be completed to a set of bases by adding extra preparation states. In practice, one should aim at aligning each basis of preparation states and the corresponding measurement to the best degree allowed by the experimental setup.

There are many exciting research directions that stem from our work. One of them is a systematic SDI characterization of generalized measurements beyond qubits. Another research direction is the robustness analysis of our scheme in the presence of shared randomness. An important contribution to the field of self-testing would be a generalization of our self-testing results for qubit to higher-dimensional systems. Finally, it is interesting to combine our methods with recent findings relating state discrimination games and resource theories of measurements and channels  \cite{oszmaniec2019operational,TakagiPRX,Roope2019}.

We would also like to draw reader's attention to a related work~\cite{arminNew}.

\begin{acknowledgements}
We thank Marcin Paw\l owski, Jedrzej Kaniewski  and Tanmoy Biswas for interesting discussions and comments. NM acknowledges the financial support by First TEAM Grant No.~2016-1/5. We acknowledge partial support by the Foundation for Polish Science (IRAP project, ICTQT, contract no. 2018/MAB/5, co-financed by EU within Smart Growth Operational Programme). We acknowledge the support from International Research Agendas Programme of the Foundation for Polish Science funded by the  European Regional Development Fund. MO acknowledges the financial support by TEAM-NET project (contract no. POIR.04.04.00-00-17C1/18-00). 
 \end{acknowledgements}

\section*{Appendix}
\appendix

In this Appendix we provide technical details that were omitted in the main text. First, in Appendix~\ref{app:remark} we prove statements regarding saturation of the bounds in Theorem~\ref{th:overlaps}. Second, in Appendix~\ref{app:rob} we formulate and prove a quantitative version of Theorem \ref{th:rob_quant_fid}, which concerns robustness of our self-testing protocol expressed in terms of average fidelities. We also prove there Theorem~\ref{th:rob_quant} that gives analogous results expressed in terms of trace distance. In Appendix~\ref{app:procrust} we provide alternative robustness analysis based on orthogonal Procrustes problem. In Appendix~\ref{app:ex} we provide technical details about the considered examples in Table~\ref{tab:examples}. Finally, in Appendix~\ref{app:extra} we give proofs of two auxiliary Lemmas needed in the proof of the technical version of Theorem~\ref{th:rob}.

\section{Explicit form of states and measurements saturating the bounds in Theorem~\ref{th:overlaps}}
\label{app:remark}
We start by giving an example of four states and two qubit measurements ($n=2$, $d=2$), for which the bound in Eq.~\eqref{eq:overlaps} on the deviation of the overlaps scales like $\sqrt{\epsilon}$. Their explicit form is given below:
\beq
\rho^1_1 = \proj{\tilde{0}},\quad \ket{\tilde{0}} = \sqrt{1-\epsilon}\ket{0}+\sqrt{\epsilon}\ket{1},\quad \rho^2_1 = \proj{+},
\eeq
where $\ket{+} = \frac{1}{\sqrt{2}}(\ket{0}+\ket{1})$, and as required by our scheme $\rho^x_2 = \Id-\rho^x_1,\; x=1,2$ and $M^x_a = \rho^x_a$, for all $x,a$. Let the experimental states and measurements be the following:
\beq
&&\tilde{\rho}^1_1 = \proj{0},\quad \tilde{\rho}^1_2 = \proj{1},\quad \tilde{\rho}^2_1 = \proj{+},\quad \tilde{\rho}^2_2 = \proj{-},\nonumber\\
&& \tilde{M}^1_1 = \proj{\tilde{0}},\quad \tilde{M}^2_1 = \proj{\tilde{+}},\quad \ket{\tilde{+}} = \sqrt{1-\epsilon}\ket{+}+\sqrt{\epsilon}\ket{-}\ .
\eeq
First, we need to show that the experimental statistics is deviated from the target one by at most $\epsilon$, i.e.,~ $|\tr(\tilde{\rho}^x_a\tilde{M}^y_b)-\tr(\rho^x_a\rho^y_b)|\leq \epsilon, \forall a,b,x,y$. Since $\tilde{\rho}^x_1+\tilde{\rho}^x_2 = \Id$ for both $x=1,2$, it is sufficient to consider the cases of $a=1,b=1$. Calculating the statistics we can see that $\tr(\tilde{\rho}^1_1\tilde{M}^1_1)=\tr(\tilde{\rho}^2_1\tilde{M}^2_1) = 1-\epsilon$, and $\tr(\tilde{\rho}^1_1\tilde{M}^2_1)=\tr(\tilde{\rho}^2_1\tilde{M}^1_1) = \frac{1}{2}+\sqrt{\epsilon-\epsilon^2}$ and the latter is just the same as the target probability, $\tr(\rho^1_1\rho^2_1)$. Finally, calculation of the overlap between the states yields:
\beq
|\tr(\rho^1_1\rho^2_1)-\tr(\tilde{\rho}^1_1\tilde{\rho}^2_1)| = \sqrt{\epsilon-\epsilon^2},
\eeq
which confirms the scaling of the first order of $\sqrt{\epsilon}$.

Now, let us show that the bounds from Eq.~\eqref{eq:overlaps} are also tight in the first order of $d$. We still consider the case of $n=2$, but now $d$ is arbitrary. Let us consider the following POVM: 
\begin{equation}
M_1 = \proj{1}+\epsilon(d-1)\proj{+}\ ,\ M_a = (\ket{a}-\delta\ket{+})(\bra{a}-\delta\bra{+})\ a=2,3,\dots,d\ ,
\end{equation}
where $\delta =\frac{1}{\sqrt{d-1}}+\sqrt{\frac{1}{d-1}-\epsilon}$,  $\{\ket{a}\}_{a=1}^d$ is the computational basis of $\Cl^d$  and $\ket{+} = \frac{1}{\sqrt{d-1}}\sum_{a=2}^{d}\ket{a}$ is the "maximally coherent" state in the subspace spanned by $\{\ket{i}\}_{i=2}^d$. In the proof of Theorem~\ref{th:overlaps} the quantity $|\tr(\tilde{\rho}^x_a\tilde{\rho}^{x'}_{a'})-\tr(\rho^x_a\rho^{x'}_{a'})|$ is upper-bounded by $\epsilon+\norm{\tilde{\rho}^x_a-\tilde{M}^x_a}$, for which there exist a state $\tilde{\rho}^{x'}_{a'}$ and an effect $\tilde{M}^{x'}_{a'}$ reaching the bound. Now if we take $\Mbt^x$ to be the POVM we just introduced, and $\tilde{\rho}^x_1=\proj{1}$, $\tilde{\rho}^x_a = \frac{M_a}{\tr(M_a)}$, $a=2,3,\dots,d$, the conditions of the Theorem~\ref{th:overlaps} will be satisfied and the resulting bound will be $d\epsilon$.

\section{Quantitative statement of the robust self-testing }
\label{app:rob}
In this part we first formulate Theorem~\ref{th:rob_quant} in which 
we present robustness of our self-testing scheme in terms of trace distance\footnote{Recall that the trace distance is defined via $\td{\sigma}{\rho} = \frac{1}{2}\norm{\sigma-\rho}_1  = \frac{1}{2}\tr\sqrt{(\sigma-\rho)^\dagger(\tilde{\rho}-\rho)}$ and it has a neat operational interpretation in term of optimal success probability $p$ of distinguishing $\rho$ and $\sigma$ via the most general quantum measurements: $p=\frac{1}{2}(1+\td{\sigma}{\rho})$.  } for states and operator norm for measurements.  Then, we give a quantitative statement of  Theorem~\ref{th:rob}, that concerns robustness of our self-testing scheme in terms of average fidelities. We then proceed with proofs of both results.

In what follows we will need the following definition.
\begin{definition}
Let $\{\mathbf{n}_i\}_{i\in [n]}$ be a set of $n$ vectors in $\Rl^l$. Matrix $\Gamma \in \Rl^{n\times n}$ is called the Gram matrix of the set $\{\mathbf{n}_i\}_{i\in [n]}$, if its elements are given by $\Gamma_{i,j} = \mathbf{n}_i\cdot\mathbf{n}_j$, $i,j\in [n]$.
\end{definition}

\setcounter{theorem}{2}
\begin{theorem}[Robust self-testing for qubits via trace distance and operator norm]
\label{th:rob_quant}
Consider  pure target qubit preparation states $\rho^x_a$ and target projective measurements $\Mb^y$,  where $a=1,2$ and $x,y\in [n]$.  Assume that $\rho^x_a = M^x_a$ for all $a,x$ and, furthermore, that experimental states $\tilde{\rho}^x_a$ and measurements $\Mbt^y$ act on Hilbert space of dimension at most $d$ and generate statistics $\tilde{p}(b|a,x,y)=\tr(\tilde{\rho}^x_a\tilde{M}^y_b)$ such that $|\tilde{p}(b|a,x,y)- \tr(\rho^x_a M^y_b)|\leq \epsilon$, for all  $a,b,x,y$. 

Let $k\in\lbrace{2,3\rbrace}$ be the cardinality of the maximal set of linearly independent  Bloch vectors of states $\rho^x_1$. Fix a set $S \subset [n]$ of $k$ linearly independent vectors $\{\mathbf{n}^x_1\}_{x\in S}$, construct their Gram matrix $\Gamma_S$, and let $L_S$ be a Cholesky factor of 
$\Gamma_S$ (i.e.,~$\Gamma_S = L_SL^T_S$ and $L_S$ is lower triangular). For every $x \in [n]\setminus S$ and both $a=1,2$ let  $\mathbf{c}^{x,a}_S$, denote the coefficients of decomposition of $\mathbf{n}^x_a$ as a linear combination of $\{\mathbf{n}^x_1\}_{x\in S}$. Finally, let us define three auxiliary functions
\beq
&& F_k(\epsilon)\coloneqq \sqrt{\epsilon}\sqrt{4k(k-1)}\sqrt{1+2\sqrt{\epsilon}+\frac{k+3}{k-1}\epsilon}\ , \nonumber\\
&& O_k(\epsilon)\coloneqq 2((k-1)\sqrt{\epsilon}+(k+1)\epsilon)\ ,\label{eq:rob_E} \\
&& E_{S,k}(\epsilon) \coloneqq \frac{1}{2\sqrt{2}}\frac{\norm{\Gamma_S^{-1}}F_k(\epsilon)}{\sqrt{1-\norm{\Gamma_S^{-1}}O_k(\epsilon)}}\min\left[\frac{\norm{\Gamma_S}}{\norm{\Gamma_S}_F},\frac{\norm{L_S}}{\sqrt{k}}\right]\ . \nonumber
\eeq
Then, there is a region of $\epsilon\in[0,\epsilon_0)$ determined by
\beq
\label{eq:rob_cond}
\norm{\Gamma_S^{-1}}O_k(\epsilon)\leq 1,
\eeq
for which there exist a qubit unitary matrix $U$  such that
\begin{subequations}\label{eq:rob_bounds}
\begin{align}
\label{eq:rob_1}
&\frac{1}{k}\sum_{x\in S}\td{U(\tilde{\rho}^x_1)^{(T)} U^\dagger}{\rho^x_1}\leq E_{S,k}(\epsilon)\ ,\\
\label{eq:rob_2}
&\frac{1}{k}\sum_{x\in S}\td{U(\tilde{\rho}^x_2)^{(T)} U^\dagger}{\rho^x_2}\leq E_{S,k}(\epsilon)+2\sqrt{\epsilon}\ ,\\
\label{eq:rob_3}
&\td{U(\tilde{\rho}^x_a)^{(T)} U^\dagger}{\rho^x_a} \leq \sqrt{k}{|\mathbf{c}^{a,x}_S|E_{S,k}(\epsilon)}+\frac{\sqrt{k}}{2}\left(|\mathbf{c}^{a,x}_S|+\frac{\sqrt{k}}{k-1}\right)\frac{\norm{\Gamma_S^{-1}}O_k(\epsilon)}{1-\norm{\Gamma_S^{-1}}O_k(\epsilon)},\\
&\text{for}\; x\notin S,\ a=1,2\ ,\nonumber\\
\label{eq:rob_m_1}
&\frac{1}{k}\sum_{y\in S}\norm{U(\tilde{M}^y_1)^{(T)}U^\dagger-M^y_1}\leq E_{S,k}(\epsilon) + \sqrt{\epsilon}\ ,\\
\label{eq:rob_m_2}
&\norm{U(\tilde{M}^y_1)^{(T)}U^\dagger-M^y_1}\leq \td{U(\tilde{\rho}^y_1)^{(T)} U^\dagger}{\rho^y_1} + \sqrt{\epsilon},\quad \text{for}\; y\notin S\  ,
\end{align}
\end{subequations}
where  $(\cdot)^{(T)}$ is the transposition (with respect a fixed basis in $\Cl^2$) that may have to be applied to all experimental states and measurements simultaneously. 
\end{theorem}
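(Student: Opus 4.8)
\textbf{Proof plan for Theorem~\ref{th:rob_quant}.}

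The plan is to reduce everything to a perturbation statement about Cholesky factors of Gram matrices of Bloch vectors, and then to translate Frobenius-norm bounds on those factors into trace-distance bounds on qubit states via the standard identity $\td{\rho}{\sigma}=\tfrac12|\vn_\rho-\vn_\sigma|$ for qubits. First I would invoke Theorem~\ref{th:overlaps}: its purity bound \eqref{eq:purity} together with $\epsilon$-closeness of the statistics lets me control $\norm{\tilde\rho^x_a-\tilde M^x_a}$, and the overlap bound \eqref{eq:overlaps} (in the improved qubit form, Lemma~\ref{lemma:overlaps}) gives $|\tr(\tilde\rho^x_a\tilde\rho^{x'}_{a'})-\tr(\rho^x_a\rho^{x'}_{a'})|=O(\sqrt\epsilon)$. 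Passing to Bloch vectors via $\tr(\rho\sigma)=\tfrac12(1+\vn_\rho\cdot\vn_\sigma)$, this bounds $|\tilde{\vn}^x_a\cdot\tilde{\vn}^{x'}_{a'}-\vn^x_a\cdot\vn^{x'}_{a'}|$ — but one must first argue that the experimental states may be taken \emph{pure} qubit states (hence genuine unit Bloch vectors) up to an error already absorbed in $\epsilon$; the purity bound gives exactly this, and the bound $\norm{\tilde\rho^x_a-\tilde M^x_a}$ controls the identification $\tilde\rho^x_a=\tilde M^x_a$ to the same order. Restricting to the chosen index set $S$, I then get $\norm{\Gamma_S-\tilde\Gamma_S}_F\leq$ (something like) $F_k(\epsilon)$-type quantity, up to the combinatorial factor $\sqrt{k(k-1)}$ counting off-diagonal entries (diagonal entries are controlled by the projectivity/purity bounds).

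Next I would apply Sun's perturbation theorem for the Cholesky factorization \cite{sun1991perturbation}: if $\Gamma_S=L_SL_S^T$ and $\tilde\Gamma_S=\tilde L_S\tilde L_S^T$ with $\tilde L_S$ lower triangular, then $\norm{L_S-\tilde L_S}_F$ is bounded by $\norm{\Gamma_S^{-1}}\,\norm{\Gamma_S-\tilde\Gamma_S}_F$ (times $\norm{L_S}$-type and $\tfrac{1}{2\sqrt2}$ constants), \emph{provided} the perturbation is small enough — this is precisely the meaning of condition \eqref{eq:rob_cond}, $\norm{\Gamma_S^{-1}}O_k(\epsilon)\le1$, which guarantees $\tilde\Gamma_S$ stays positive definite and keeps the denominator $\sqrt{1-\norm{\Gamma_S^{-1}}O_k(\epsilon)}$ real. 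The rows of $L_S$ (resp. $\tilde L_S$) realise Bloch vectors with the correct mutual overlaps $\Gamma_S$ (resp. $\tilde\Gamma_S$); since they are lower-triangular and thus in ``canonical position'', the difference $\norm{L_S-\tilde L_S}_F$ directly controls $\sum_{x\in S}|\vn^x_1-\mathrm{row}_x(\tilde L_S)|^2$, hence $\sum_{x\in S}\td{\cdot}{\cdot}$ after rotating the experimental frame by the orthogonal map sending one canonical frame to the other. Here is where the orthogonal transformation $O$ from the $\epsilon=0$ sketch reappears, now only approximately orthogonal, and where $\det O=\pm1$ forces the possible global transposition $(\cdot)^{(T)}$; the well-known $\su$–$SO(3)$ correspondence \cite{NielsenBook,BergmannWigner1964} upgrades the (approximately) orthogonal map to a qubit unitary $U$. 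The $\min[\cdot,\cdot]$ in $E_{S,k}$ comes from two alternative ways of bounding the fidelity loss — one via $\norm{\Gamma_S}/\norm{\Gamma_S}_F$ exploiting that Sun's bound is on the Frobenius norm of $\Gamma_S-\tilde\Gamma_S$, the other via $\norm{L_S}/\sqrt k$ — and I would simply take whichever is smaller.

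The remaining bounds follow by bookkeeping. For $a=2$, \eqref{eq:rob_2}: since $\rho^x_2=\Id-\rho^x_1$ exactly but $\tilde\rho^x_2$ need only satisfy $\tr(\tilde\rho^x_2\tilde M^x_2)\ge1-\epsilon$, the triangle inequality through $\Id-\tilde\rho^x_1$ costs an extra $2\sqrt\epsilon$ (controlled again by the purity estimate). For $x\notin S$, \eqref{eq:rob_3}: write $\vn^x_a=\sum_{x'\in S}(\mathbf c^{a,x}_S)_{x'}\vn^{x'}_1$ and the same decomposition with perturbed coefficients $\tilde{\mathbf c}$ determined by the measured overlaps; then $|\vn^x_a-\tilde{\vn}^x_a|$ splits into a term $\propto|\mathbf c^{a,x}_S|\cdot\norm{L_S-\tilde L_S}_F$ (already bounded) plus a term $\propto|\mathbf c^{a,x}_S-\tilde{\mathbf c}|$, and the latter is controlled by $\norm{\Gamma_S^{-1}-\tilde\Gamma_S^{-1}}\lesssim \norm{\Gamma_S^{-1}}^2\norm{\Gamma_S-\tilde\Gamma_S}$ — which produces the $\tfrac{\norm{\Gamma_S^{-1}}O_k(\epsilon)}{1-\norm{\Gamma_S^{-1}}O_k(\epsilon)}$ factor and the $\tfrac{\sqrt k}{k-1}$ additive piece. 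Finally \eqref{eq:rob_m_1} and \eqref{eq:rob_m_2} follow from $\tilde M^y_b=\tilde\rho^y_b$ up to operator-norm error $O(\sqrt\epsilon)$ established at the start (for $y\in S$ combine with \eqref{eq:rob_1}; for $y\notin S$ with \eqref{eq:rob_3}). I expect the main obstacle to be the first paragraph rather than the last: carefully converting ``$\epsilon$-close statistics'' into a clean bound on $\norm{\Gamma_S-\tilde\Gamma_S}_F$ \emph{with the explicit constants} appearing in $F_k,O_k$ — in particular tracking how the dimension of the experimental Hilbert space (which may exceed $2$) is squeezed back down to a genuine qubit picture via the purity bound — and then checking that Sun's theorem applies under exactly the stated smallness condition \eqref{eq:rob_cond}; the $\su$/$SO(3)$ step and the decomposition bookkeeping are standard once that is in place.
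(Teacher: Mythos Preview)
Your plan is correct and matches the paper's proof essentially step for step: Lemma~\ref{lemma:overlaps} feeds Lemma~\ref{lemma:norm} to bound $\norm{\Delta\Gamma_S}_F$ and $\norm{\Delta\Gamma_S}$, Sun's Cholesky perturbation theorem yields \eqref{eq:rob_1}, the triangle inequality through $\Id-\tilde\rho^x_1$ gives \eqref{eq:rob_2}, a linear-system stability bound (the paper quotes Higham's Theorem~7.2 rather than the resolvent-type estimate $\norm{\Gamma_S^{-1}-\tilde\Gamma_S^{-1}}$, but these are equivalent here) gives \eqref{eq:rob_3}, and $\norm{\tilde\rho^x_a-\tilde M^x_a}\le\sqrt\epsilon$ gives \eqref{eq:rob_m_1}--\eqref{eq:rob_m_2}. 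One small clarification: you do \emph{not} need to replace the experimental states by pure ones---the paper works directly with the (possibly sub-unit) experimental Bloch vectors, and the purity deficit simply shows up as the diagonal contribution to $\norm{\Delta\Gamma_S}_F$, already accounted for in $F_k(\epsilon)$.
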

The reason for such a formulation of the theorem lays in the proof techniques that were used to derive it. Namely, we use results on the stability of the Cholesky factorization. Since the result employed by us (Ref.~\cite{sun1991perturbation}) is valid for positive-definite matrices we cannot apply it to the Gram matrix of all of the vectors $\mathbf{n}^x_a$ due to their linear dependence.  As a result, our bounds depend on the particular choice of the set $S$ of states whose Bloch vectors are linearly independent. In what follows, without the loss of generality we take $S=\{1,2\}$ for $k=2$ and $S=\{1,2,3\}$ for $k=3$ in the proof.  In practice, however, one should consider all subsets $S$ of states that give non-singular Gram matrices having in mind a specific application. 

\emph{Outline of the proof.--} In what follows we will consider a particular subset $\{\mathbf{n}^x_1\}_{x\in S}$ of $k$ linearly independent Bloch vectors. For simplify we will omit the subscript $S$ whenever possible. The main idea of the proof is to use a Cholesky factors of $k\times k$ the Gram matrices 
$\Gamma$ and $\tilde{\Gamma}$ of target ($\{\mathbf{n}^x_1\}_{x\in S}$) and experimental ($\{\mathbf{\tilde{n}}^x_1\}_{x\in S}$) Bloch vectors respectively.  We then make use of the result of Ref.~\cite{sun1991perturbation} on the stability of Cholesky factorization which, in simple terms, states that if $\Gamma$ and $\tilde{\Gamma}$ are close to each other, so are their Cholesky factors $L$ and $\tilde{L}$. Specifically, this result, which we quote bellow (see Theorem [Sun 1991]), sets an upper bound on the Frobenius norm of $\Delta L = \tilde{L}-L$ in terms of the Frobenius norm of the perturbation  $\Delta\Gamma = \tilde{\Gamma}-\Gamma$. The Frobeniuous norm of the perturbation $\Delta L$ can be connected to the trace distance between states $\rho^x_1$ and the rotated states $\tilde{\rho}^x_1$ in the selected subset $S$. On the other hand, the bound on $\norm{\Delta\Gamma}_F$ can be estimated from our assumption: $|p(b|a,x,y)-\tilde{p}(b|a,x,y)|\leq \epsilon$. This bound follows directly from the results of Theorem~\ref{th:overlaps}, but here we use an improved qubit stability bounds, which are given by Lemma~\ref{lemma:overlaps}. This, in turn, leads to stronger estimates for the norms of $\Delta\Gamma$ in Lemma~\ref{lemma:norm} bellow. Combining all these results produces bounds in Eq.~\eqref{eq:rob_1}. 

In the next part of the proof we determine the trace distance between states $\rho^x_2$ and $\tilde{\rho}^x_2$ for $x\in S$ based solely on the fact that $\rho^x_2 = \Id-\rho^x_1$ and $\tilde{\rho}^x_2$ are \emph{close to} $\Id-\tilde{\rho}^x_1$. This gives the bounds in Eq.~\eqref{eq:rob_2}. For states $x\notin S$ we use the fact that they can be decomposed in the basis of the states in $S$. Since this linear decomposition can be different for target and experimental states, we use the result on stability of linear systems (see Theorem [Higham 2002], which we also quote below). The bounds for the states $x\notin S$ are given by Eq.~\eqref{eq:rob_3}. Finally, we connect the bounds for the distance between the target and experimental measurements and the distance between the corresponding states resulting in Eq.~\eqref{eq:rob_m_1} and \eqref{eq:rob_m_2}.

\setcounter{theorem}{1}
\begin{theorem}[Quantitative formulation of robust self-testing for qubits]
\label{th:rob_quant_fid}
Consider  pure target qubit preparation states $\rho^x_a$ and target projective measurements $\Mb^y$,  where $a=1,2$ and $x,y\in [n]$.  Assume that $\rho^x_a = M^x_a$ for all $a,x$ and, furthermore, that experimental states $\tilde{\rho}^x_a$ and measurements $\Mbt^y$ act on Hilbert space of dimension at most $d$ and generate statistics $\tilde{p}(b|a,x,y)=\tr(\tilde{\rho}^x_a\tilde{M}^y_b)$ such that $|\tilde{p}(b|a,x,y)- \tr(\rho^x_a M^y_b)|\leq \epsilon$, for all  $a,b,x,y$. 

Let $k\in\lbrace{2,3\rbrace}$ be the cardinality of the maximal set of linearly independent  Bloch vectors of states $\rho^x_1$. Fix a set $S \subset [n]$ of $k$ linearly independent vectors $\{\mathbf{n}^x_1\}_{x\in S}$, construct their Gram matrix $\Gamma_S$, and let $L_S$ be a Cholesky factor of 
$\Gamma_S$ (i.e.,~$\Gamma_S = L_SL^T_S$ and $L_S$ is lower triangular). For every $x \in [n]\setminus S$ and both $a=1,2$ let  $\mathbf{c}^{x,a}_S$, denote the coefficients of decomposition of $\mathbf{n}^x_a$ as a linear combination of $\{\mathbf{n}^x_1\}_{x\in S}$. Finally, let us define three auxiliary functions
\beq
&& F_k(\epsilon)\coloneqq \sqrt{\epsilon}\sqrt{4k(k-1)}\sqrt{1+2\sqrt{\epsilon}+\frac{k+3}{k-1}\epsilon}\ , \nonumber\\
&& O_k(\epsilon)\coloneqq 2((k-1)\sqrt{\epsilon}+(k+1)\epsilon)\ , \nonumber\\
&& E_{S,k}(\epsilon) \coloneqq \frac{1}{2\sqrt{2}}\frac{\norm{\Gamma_S^{-1}}F_k(\epsilon)}{\sqrt{1-\norm{\Gamma_S^{-1}}O_k(\epsilon)}}\min\left[\frac{\norm{\Gamma_S}}{\norm{\Gamma_S}_F},\frac{\norm{L_S}}{\sqrt{k}}\right]\ . \label{eq:rob_E2}
\eeq
Then, there is a region of $\epsilon\in[0,\epsilon_0)$ determined by
\beq
\label{eq:rob_cond2}
\norm{\Gamma_S^{-1}}O_k(\epsilon)\leq 1,
\eeq
for which there exist a qubit unitary matrix $U$  such that 
\begin{subequations}\label{eq:rob_bounds2}
\begin{align}
\label{eq:rob_1_2}
&\frac{1}{k}\sum_{x\in S}\tr(U(\tilde{\rho}^x_1)^{(T)} U^\dagger \rho^x_1 )\geq 1- \frac{\epsilon(1-2\epsilon)}{(1-\epsilon)^2} - E_{S,k}(\epsilon)^2\ ,\\
\label{eq:rob_2_2}
&\frac{1}{k}\sum_{x\in S}\tr(U(\tilde{\rho}^x_2)^{(T)} U^\dagger \rho^x_2)\geq 1- \frac{\epsilon(1-2\epsilon)}{(1-\epsilon)^2} - (E_{S,k}(\epsilon)+2\sqrt{\epsilon})^2\ ,\\
\label{eq:rob_3_2}
&\tr(U(\tilde{\rho}^x_a)^{(T)} U^\dagger \rho^x_a) \geq 1-\frac{\epsilon(1-2\epsilon)}{(1-\epsilon)^2}-\left(\td{U(\tilde{\rho}^x_a)^{(T)} U^\dagger}{\rho^x_a} \right)^2,\quad \text{for}\; x\notin S,\ a=1,2\ ,\\
\label{eq:rob_m_1_2}
&\frac{1}{k}\sum_{y\in S}\tr( U(\tilde{M}^y_1)^{(T)}U^\dagger M^y_1 ) \geq 1 - \frac{5}{2}\epsilon -\left(\sqrt{2}E_{S,k}(\epsilon) +\sqrt{\epsilon}\right)^2   \ ,\\
\label{eq:rob_m_2_2}
&\tr( U(\tilde{M}^y_1)^{(T)}U^\dagger M^y_1 )
 \geq 1 -\epsilon - \left(\td{U(\tilde{\rho}^y_1)^{(T)} U^\dagger}{\rho^y_1} +\sqrt{\epsilon}\right)^2,\quad \text{for}\; y\notin S\  ,
\end{align}
\end{subequations}
where  $(\cdot)^{(T)}$ is the transposition (with respect a fixed basis in $\Cl^2$) that may have to be applied to all experimental states and measurements simultaneously. Note that in formulas \eqref{eq:rob_3_2} and \eqref{eq:rob_m_2} we have used trace distance $\td{U(\tilde{\rho}^x_a)^{(T)} U^\dagger}{\rho^x_a}$ in order to simplify the resulting formulas. The bound on this quantity is given in Eq.~\eqref{eq:rob_3}.
\end{theorem}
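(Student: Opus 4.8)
The plan is to obtain Theorem~\ref{th:rob_quant_fid} as a direct corollary of Theorem~\ref{th:rob_quant}: once the trace-distance and operator-norm bounds in Eqs.~\eqref{eq:rob_bounds} are established, the fidelity bounds \eqref{eq:rob_bounds2} follow from purely \emph{local} qubit identities plus elementary bounds on the purity of the experimental objects, with the same unitary $U$, the same transposition $(\cdot)^{(T)}$, and the same admissible range of $\epsilon$ (determined by $\norm{\Gamma_S^{-1}}O_k(\epsilon)\le1$). The starting point is that for a pure qubit state $\rho$ and an \emph{arbitrary} qubit state $\sigma$ the Bloch representation gives the exact identity
\[
\tr(\rho\sigma)=1-\tfrac12\bigl(1-\tr(\sigma^2)\bigr)-\bigl(\td{\rho}{\sigma}\bigr)^2 ,
\]
together with its effect analogue: for a pure qubit projector $M$ and an arbitrary qubit effect $N$,
\[
\tr(MN)=1-\tfrac12\bigl(1-\tr(N^2)\bigr)-\tfrac12\norm{M-N}_F^2,\qquad \tfrac12\norm{M-N}_F^2\le\norm{M-N}^2 ,
\]
and, also for pure qubit $M$ and any qubit state $\sigma$, $\norm{M-\sigma}_F^2=2\,(\td{M}{\sigma})^2$. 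Since $U(\cdot)^{(T)}U^\dagger$ preserves purities and all the relevant distances (when the target is transported the same way), these identities apply verbatim to every summand appearing in Eqs.~\eqref{eq:rob_1_2}--\eqref{eq:rob_m_2_2}.

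First I would bound the purities. From the perfect-correlation hypotheses $\tr(\tilde{\rho}^x_a\tilde{M}^x_a)\ge1-\epsilon$ together with $\tilde{M}^x_1+\tilde{M}^x_2=\Id$, one argues --- exactly as in the purity part of the proof of Theorem~\ref{th:overlaps}, specialised to $d=2$ --- that the eigenvalues of each effect $\tilde{M}^x_a$ lie in $[1-\epsilon,1]$ and $[0,\epsilon]$, whence $1-\tr((\tilde{M}^x_a)^2)\le2\epsilon$. Working in the eigenbasis of $\tilde{M}^x_a$, the same inequality $\tr(\tilde{\rho}^x_a\tilde{M}^x_a)\ge1-\epsilon$ forces the weight of $\tilde{\rho}^x_a$ on the top eigenvector to be at least $\tfrac{1-2\epsilon}{1-\epsilon}$, giving $1-\tr((\tilde{\rho}^x_a)^2)\le\tfrac{2\epsilon(1-2\epsilon)}{(1-\epsilon)^2}$; and the expansion $\norm{\tilde{M}^x_a-\tilde{\rho}^x_a}_F^2=\tr((\tilde{M}^x_a)^2)+\tr((\tilde{\rho}^x_a)^2)-2\tr(\tilde{M}^x_a\tilde{\rho}^x_a)\le2\epsilon+\epsilon^2$ gives $\norm{\tilde{M}^x_a-\tilde{\rho}^x_a}_F\le2\sqrt\epsilon$. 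Feeding the purity bound into the state identity already yields the pointwise estimate \eqref{eq:rob_3_2} once \eqref{eq:rob_3} is substituted; likewise the effect identity together with $1-\tr(N^2)\le2\epsilon$ and the operator-norm bound \eqref{eq:rob_m_2} yields \eqref{eq:rob_m_2_2}.

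For the averaged bounds I would average the pointwise identity over $x\in S$ and use convexity. The one subtlety is that the subtracted term becomes $\tfrac1k\sum_{x\in S}(\td{\cdot}{\cdot})^2$, so I need the \emph{mean-square} form of the trace-distance estimate, $\tfrac1k\sum_{x\in S}(\td{U(\tilde{\rho}^x_1)^{(T)}U^\dagger}{\rho^x_1})^2\le E_{S,k}(\epsilon)^2$ --- this is precisely the quantity controlled in the proof of Theorem~\ref{th:rob_quant} via $\norm{\Delta L}_F^2$, with \eqref{eq:rob_1} its Cauchy--Schwarz consequence, so it is available for free. Combined with the purity bound this gives \eqref{eq:rob_1_2}; the $a=2$ case inherits the additive $2\sqrt\epsilon$ in the trace distance (the pointwise estimate underlying \eqref{eq:rob_2}), producing $(E_{S,k}(\epsilon)+2\sqrt\epsilon)^2$ in \eqref{eq:rob_2_2}. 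For the measurement average I would write pointwise $\norm{M^y_1-U(\tilde{M}^y_1)^{(T)}U^\dagger}_F\le\sqrt2\,\td{U(\tilde{\rho}^y_1)^{(T)}U^\dagger}{\rho^y_1}+\norm{\tilde{M}^y_1-\tilde{\rho}^y_1}_F$, invoking the identity $\norm{M-\sigma}_F^2=2(\td{M}{\sigma})^2$ (this is what introduces the factor $\sqrt2$ in \eqref{eq:rob_m_1_2}), square, average using $\tfrac1k\sum_{y\in S}(\td{\cdot}{\cdot})^2\le E_{S,k}(\epsilon)^2$ and $\norm{\tilde{M}^y_1-\tilde{\rho}^y_1}_F\le2\sqrt\epsilon$, and plug the result together with $\tfrac1k\sum_y\tfrac12(1-\tr((\tilde{M}^y_1)^2))\le\epsilon$ into the averaged effect identity to get \eqref{eq:rob_m_1_2}.

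I expect the main obstacle to be not conceptual but careful bookkeeping of constants: pinning the purity and eigenvalue estimates down tightly enough to land on $\tfrac{\epsilon(1-2\epsilon)}{(1-\epsilon)^2}$ and $\tfrac52\epsilon$ as stated (one has some slack here, since the estimates above already give at least these), and keeping in mind that qubit \emph{effects} are sub-normalised, so the Bloch-ball geometry must be applied to $\tilde{M}^y_b$ and to $M^y_b-N$ directly rather than to a rescaled state. Everything else is an assembly of Theorem~\ref{th:rob_quant}, the three local identities above, and two elementary convexity/Cauchy--Schwarz steps.
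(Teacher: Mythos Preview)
Your proposal is correct and follows essentially the same route as the paper: the paper also reduces Theorem~\ref{th:rob_quant_fid} to Theorem~\ref{th:rob_quant} via the Frobenius/fidelity identity $\norm{X-\rho}_F^2=1+\tr(X^2)-2\tr(X\rho)$ (equivalent to your ``state'' and ``effect'' identities), combined with the same purity estimate $1-\tr((\tilde\rho^x_a)^2)\le \tfrac{2\epsilon(1-2\epsilon)}{(1-\epsilon)^2}$ and the mean-square control $\tfrac{1}{k}\sum_{x\in S}(\td{\cdot}{\cdot})^2\le \norm{\Delta L}_F^2/(4k)\le E_{S,k}(\epsilon)^2$. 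The only noticeable difference is bookkeeping: the paper uses the sharper operator-norm bound $\norm{\tilde M^x_a-\tilde\rho^x_a}\le\sqrt{\epsilon}$ from Lemma~\ref{lemma:overlaps} rather than your Frobenius bound $\le 2\sqrt{\epsilon}$, which affects only inessential constants in \eqref{eq:rob_m_1_2}.
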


\begin{remark}
Results of Theorem~\ref{th:rob_quant_fid} are obtained using a similar reasoning to that given in the outline of the proof of Theorem \ref{th:rob_quant}. However, the proof steps are supplemented by bounds connecting fidelity to the trace distance (for states) and operator norm (for measurement operators). The proof of Theorem~\ref{th:rob_quant_fid} is given at the end of this part of the Appendix. 
\end{remark} 

Before we proceed we state two theorems from the literature that are needed for our proof. First, we repeat the statement of the Theorem~1.4 from Ref.~\cite{sun1991perturbation} for real-valued matrices. Second, we state a result concerning stability of systems of linear equation, which we borrow from Ref.~\cite{higham2002accuracy} (Theorem 7.2).  We changed the notation used in these papers in accordance with our paper.

\begin{theorem*}[Sun 1991 - Stability of Cholesky factorisation]
Let $\Gamma$ be an $k\times k$ positive definite matrix and $\Gamma = LL^T$ its Cholesky factorization. If $\Delta\Gamma$ is an $k\times k$ symmetric matrix satisfying 
\beq
\label{eq:ch_bound_condition}
\norm{\Gamma^{-1}}\norm{\Delta\Gamma}\leq 1,
\eeq
then there is a unique Cholesky factorization
\beq
\Gamma+\Delta\Gamma = (L+\Delta L)(L+\Delta L)^T,\nonumber
\eeq
and
\beq
\label{eq:ch_bound}
\norm{\Delta L}_F\leq \frac{\norm{\Gamma^{-1}}\norm{\Delta\Gamma}_F}{\sqrt{2(1-\norm{\Gamma^{-1}}\norm{\Delta\Gamma}})}\min\left[\frac{\norm{L}_F\norm{\Gamma}}{\norm{\Gamma}_F},\norm{L}\right].
\eeq
\end{theorem*}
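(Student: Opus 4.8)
The plan is to turn the perturbed factorisation into a fixed-point equation for a lower-triangular matrix, solve it by iteration, and translate the resulting bound back into one on $\Delta L$. Since $L+\Delta L$ must again be lower triangular with positive diagonal, write $\Delta L = L\,P$ with $P$ lower triangular, and set $E \coloneqq L^{-1}\Delta\Gamma\,L^{-T}$, which is symmetric. Expanding $(L+\Delta L)(L+\Delta L)^{T} = \Gamma + \Delta\Gamma$ and cancelling $\Gamma = LL^{T}$ gives the nonlinear equation
\[
E = P + P^{T} + P P^{T}.
\]
Introduce the linear ``lower-truncation'' map $\mathrm{low}(\cdot)$ that returns the strictly lower-triangular part of a matrix plus half its diagonal; for symmetric $M$ it is the unique lower-triangular solution of $P+P^{T}=M$, and comparing entries gives the sharp bound $\norm{\mathrm{low}(M)}_{F} \le \tfrac{1}{\sqrt 2}\,\norm{M}_{F}$. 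The equation is then equivalent to $P = \mathrm{low}\!\left(E - P P^{T}\right)$, and this $\tfrac{1}{\sqrt2}$ is exactly the origin of the $\sqrt2$ in the denominator of Eq.~\eqref{eq:ch_bound}.

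For existence and uniqueness I would iterate $P_{0}=0$, $P_{m+1}=\mathrm{low}(E-P_{m}P_{m}^{T})$. Since $\norm{E} \le \norm{L^{-1}}^{2}\norm{\Delta\Gamma} = \norm{\Gamma^{-1}}\norm{\Delta\Gamma}$, the hypothesis $\norm{\Gamma^{-1}}\norm{\Delta\Gamma}\le 1$ (taken strict; at equality the bound in Eq.~\eqref{eq:ch_bound} is trivial) keeps the iterates in a Frobenius ball on which $P\mapsto\mathrm{low}(E-PP^{T})$ is a contraction, so the sequence converges to the unique $P_{\star}$, yielding the unique factorisation $\Gamma+\Delta\Gamma = (L+\Delta L)(L+\Delta L)^{T}$. (Equivalently one may use the homotopy $\Gamma_{t}=\Gamma+t\Delta\Gamma$, which is positive definite on $[0,1]$ because $\lambda_{\min}(\Gamma_{t})\ge\lambda_{\min}(\Gamma)-t\norm{\Delta\Gamma}>0$; its Cholesky factor $L(t)$ is real analytic and solves $\dot L(t)=L(t)\,\mathrm{low}\!\big(L(t)^{-1}\Delta\Gamma\,L(t)^{-T}\big)$, with $\Delta L=\int_{0}^{1}\dot L(t)\,dt$.)

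For the quantitative bound I would reinsert the fixed-point equation into the norm inequality: $\norm{P_{\star}}_{F}\le\tfrac{1}{\sqrt2}\big(\norm{E}_{F}+\norm{P_{\star}P_{\star}^{T}}_{F}\big)$ with $\norm{P_{\star}P_{\star}^{T}}_{F}\le\norm{P_{\star}}\,\norm{P_{\star}}_{F}$; solving for $\norm{P_{\star}}_{F}$ gives a bound of the form $\norm{\Gamma^{-1}}\norm{\Delta\Gamma}_{F}\big/\sqrt{2\big(1-\norm{\Gamma^{-1}}\norm{\Delta\Gamma}\big)}$ once $\norm{P_{\star}}$ is controlled by $\norm{E}\le\norm{\Gamma^{-1}}\norm{\Delta\Gamma}$. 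Finally $\Delta L=LP_{\star}$, so $\norm{\Delta L}_{F}\le\min\{\norm{L}\,\norm{P_{\star}}_{F},\,\norm{L}_{F}\,\norm{P_{\star}}\}$; bounding the two branches and converting operator to Frobenius norms of $\Gamma$ in the second produces the factor $\min\!\big[\tfrac{\norm{L}_{F}\norm{\Gamma}}{\norm{\Gamma}_{F}},\,\norm{L}\big]$ of Eq.~\eqref{eq:ch_bound}.

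The hard part is not the architecture but the two-norm bookkeeping needed to match the constants exactly: one must keep the \emph{Frobenius} norm of the perturbation in the numerator while retaining the \emph{operator} norm in the stability radius $1-\norm{\Gamma^{-1}}\norm{\Delta\Gamma}$, and — the real obstacle — control $\norm{P_{\star}}$ without losing a factor $\log k$, since the truncation map $\mathrm{low}$ is not uniformly operator-norm bounded. The clean way around this is a column-by-column analysis of the Cholesky recursion, propagating bounds on successive columns of $L+\Delta L$, which is the route of Ref.~\cite{sun1991perturbation}; extracting the precise mixed-norm prefactor $\min[\cdot,\cdot]$ is the associated, more tedious, bookkeeping.
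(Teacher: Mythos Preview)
The paper does not prove this statement at all: it is quoted verbatim from Ref.~\cite{sun1991perturbation} as an external tool (``we state two theorems from the literature that are needed for our proof''), and is then invoked as a black box inside the proofs of Theorems~\ref{th:rob_quant} and~\ref{th:rob_quant_fid}. So there is no ``paper's own proof'' to compare your proposal against.

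That said, your sketch is the right architecture and is essentially the one Sun uses: writing $\Delta L = L P$ with $P$ lower triangular, reducing to $P+P^{T}+PP^{T}=L^{-1}\Delta\Gamma L^{-T}$, introducing the lower-truncation operator with the sharp $1/\sqrt{2}$ Frobenius bound, and then controlling $P$ by a contraction/continuation argument. You also correctly flag the genuine subtlety, namely that $\mathrm{low}$ is not uniformly bounded in operator norm, so the mixed-norm constant has to come from a more careful (column-wise) analysis rather than a naive submultiplicativity step. Your final passage from $\norm{P_\star}_F$ to the $\min[\cdot,\cdot]$ factor via $\Delta L = LP_\star$ is the right idea, though as written the second branch needs a bit more care: $\norm{\Delta L}_F \le \norm{L}_F\norm{P_\star}$ requires an operator-norm bound on $P_\star$, not just the Frobenius bound you derived, and converting that into $\norm{L}_F\norm{\Gamma}/\norm{\Gamma}_F$ is where the detailed bookkeeping you allude to actually lives.
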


\begin{remark}
This theorem dates back to 1991, and, of course there have been attempts to improve this result. However, the recent review \cite{chang2010rigorous} on this topic suggests that the bound given in the Theorem above is the most appropriate in our case (Remark 3.2, Ref.~\cite{chang2010rigorous}).
\end{remark}

\begin{theorem*}[Higham 2002 - Stability of systems of linear equations]
Let $\mathbf{c}$ be a solution of a system of linear equations $\Gamma\mathbf{c} = \mathbf{g}$, where $\mathbf{c},\mathbf{g}\in\Rl^k$ and $\Gamma\in\mathrm{Mat}_{k\times k}(\Rl)$. Let now $\mathbf{\tilde{c}}$ be a solution to  $(\Gamma+\Delta\Gamma)\mathbf{\tilde{c}} = \mathbf{g}+\mathbf{\Delta g}$, where $\Delta\Gamma\in\mathrm{Mat}_{k\times k}(\Rl) $, $\mathbf{\Delta g}\in\Rl^k$. Assume that there exits $\delta'>0$ such that
$\norm{\Delta\Gamma}\leq \delta'\norm{E}$ and $|\mathbf{\Delta g}|\leq \delta' |\mathbf{f}|$ (for some $E\in\mathrm{Mat}_{k\times k}(\Rl)$, $\mathbf{f}\in \Rl^k$), and that $\delta'\norm{\Gamma^{-1}}\norm{E}\leq 1$. Then, we have 
\beq
\frac{|\mathbf{c}-\mathbf{\tilde{c}}|}{|\mathbf{c}\,|}\leq \frac{\delta'}{1-\delta'\norm{\Gamma^{-1}}\norm{E}}\left(\frac{\norm{\Gamma^{-1}}|\mathbf{f}|}{|\mathbf{c}\,|}+\norm{\Gamma^{-1}}\norm{E}\right)\ . 
\eeq
\end{theorem*}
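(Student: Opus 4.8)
\emph{Proof proposal.} The statement is the classical first-order perturbation bound for linear systems, and I would prove it by the standard ``subtract and invert'' argument. Write $\tilde{\mathbf{c}} = \mathbf{c} + \Delta\mathbf{c}$ and substitute into $(\Gamma+\Delta\Gamma)\tilde{\mathbf{c}} = \mathbf{g}+\Delta\mathbf{g}$. Expanding the product and using $\Gamma\mathbf{c}=\mathbf{g}$ to cancel the leading terms yields the \emph{exact} identity $(\Gamma+\Delta\Gamma)\,\Delta\mathbf{c} = \Delta\mathbf{g} - \Delta\Gamma\,\mathbf{c}$, with no remainder discarded.

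Next I would show that $\Gamma+\Delta\Gamma$ is invertible under the hypothesis $\delta'\norm{\Gamma^{-1}}\norm{E}\le 1$ (the conclusion is vacuous when equality holds, so one may assume strict inequality). Factoring $\Gamma+\Delta\Gamma = \Gamma(\Id + \Gamma^{-1}\Delta\Gamma)$ and noting $\norm{\Gamma^{-1}\Delta\Gamma}\le\norm{\Gamma^{-1}}\norm{\Delta\Gamma}\le\delta'\norm{\Gamma^{-1}}\norm{E}<1$, the Neumann series converges and gives $\norm{(\Id+\Gamma^{-1}\Delta\Gamma)^{-1}}\le (1-\norm{\Gamma^{-1}}\norm{\Delta\Gamma})^{-1}\le(1-\delta'\norm{\Gamma^{-1}}\norm{E})^{-1}$.

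Then I would solve for $\Delta\mathbf{c}$: from the identity above, $\Delta\mathbf{c} = (\Id+\Gamma^{-1}\Delta\Gamma)^{-1}\Gamma^{-1}(\Delta\mathbf{g}-\Delta\Gamma\,\mathbf{c})$. Taking Euclidean norms, using submultiplicativity of the operator norm together with the triangle inequality $|\Delta\mathbf{g}-\Delta\Gamma\,\mathbf{c}|\le|\Delta\mathbf{g}|+\norm{\Delta\Gamma}\,|\mathbf{c}|$, and inserting the hypotheses $|\Delta\mathbf{g}|\le\delta'|\mathbf{f}|$, $\norm{\Delta\Gamma}\le\delta'\norm{E}$, one obtains $|\Delta\mathbf{c}|\le \frac{\delta'}{1-\delta'\norm{\Gamma^{-1}}\norm{E}}\,\norm{\Gamma^{-1}}\bigl(|\mathbf{f}|+\norm{E}\,|\mathbf{c}|\bigr)$. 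Dividing through by $|\mathbf{c}|$ reproduces exactly the claimed inequality.

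The only genuinely non-routine step is the resolvent bound on $\Gamma+\Delta\Gamma$ via the Neumann series; everything else is bookkeeping with submultiplicative norms and the triangle inequality. (One should also keep in mind that the statement is of interest only in the non-degenerate regime $\mathbf{g}\neq\mathbf{0}$, where $\mathbf{c}\neq\mathbf{0}$ and dividing by $|\mathbf{c}|$ is legitimate.)
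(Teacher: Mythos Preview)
Your argument is correct and is precisely the standard proof of this perturbation bound. Note, however, that the paper does not supply its own proof of this statement: it is quoted verbatim as Theorem~7.2 from Higham's book \cite{higham2002accuracy} and used as an auxiliary ingredient in the proof of Theorem~\ref{th:rob_quant}. Since there is no in-paper proof to compare against, the relevant check is simply correctness --- and your derivation (exact residual identity $(\Gamma+\Delta\Gamma)\Delta\mathbf{c}=\Delta\mathbf{g}-\Delta\Gamma\,\mathbf{c}$, Neumann-series invertibility, then norm bounds) is the textbook route and reproduces the stated inequality without gaps.
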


\begin{lemma}
\label{lemma:overlaps}
Under the conditions of Theorem~\ref{th:overlaps} qubit states and measurements  $\{\rho^x_a\}_{a,x}$, $\{\Mb^y\}_{y}$ and $\{\tilde{\rho}^x_a\}_{a,x}$, $\{\Mbt^y\}_{y}$ satisfy
\beq
&&\norm{\tilde{\rho}^x_a} \geq \frac{1-2\epsilon}{1-\epsilon},\; \forall a,x,\nonumber\\
&&|\tr(\tilde{\rho}^x_a\tilde{\rho}^{x'}_{a'})-\tr(\rho^x_a\rho^{x'}_{a'})|\leq \epsilon+\sqrt{\epsilon},\nonumber\\
&&|\tr(\tilde{M}^y_b\tilde{M}^{y'}_{b'})-\tr(M^y_bM^{y'}_{b'})|\leq \epsilon+(1+\epsilon)\sqrt{\epsilon},\nonumber
\eeq
$\forall x\neq x', a\neq a'$ and $\forall y\neq y', b\neq b'$ respectively, and $\epsilon \leq \frac{1}{3}$.
\end{lemma}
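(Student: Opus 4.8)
The plan is to repeat the argument of the proof of Theorem~\ref{th:overlaps} but exploit the fact that for qubits ($d=2$) the POVM $\tilde{\Mb}^x$ has only two effects whose eigenvalues sum to $2$, which gives a much tighter handle on the ``$\mathrm{Rest}$'' terms. First I would re-derive the bound on the norms of the experimental states. From $\tilde{M}^x_a = \lambda_a\proj{\phi_a}+\mathrm{Rest}_a$ and $\tr(\tilde{\rho}^x_a\tilde{M}^x_a)\geq 1-\epsilon$ one still gets $\norm{\tilde{\rho}^x_a}+\tr(\mathrm{Rest}_a)\geq 1-\epsilon$. For $d=2$ the identity $\sum_a \tr(\tilde{M}^x_a)=2$ together with $\lambda_a\leq 1$ and $\tr(\mathrm{Rest}_a)=\tr(\tilde{M}^x_a)-\lambda_a$ yields, effect by effect, a bound of the form $\tr(\mathrm{Rest}_a)\leq \epsilon/(1-\epsilon)$: indeed $\lambda_a\geq 1-\epsilon$ (since $\lambda_a=\norm{\tilde{M}^x_a}\geq \tr(\tilde{\rho}^x_a\tilde{M}^x_a)\geq 1-\epsilon$), while $\lambda_a+\tr(\mathrm{Rest}_a)+\lambda_{a'}+\tr(\mathrm{Rest}_{a'})=2$ and each $\lambda\leq 1$ forces $\tr(\mathrm{Rest}_a)\leq 1-\lambda_{a'}\leq\epsilon$; combined with $\norm{\tilde\rho^x_a}\ge 1-\epsilon-\tr(\mathrm{Rest}_a)$ this already gives $\norm{\tilde\rho^x_a}\ge 1-2\epsilon$, and a slightly more careful bookkeeping using $\lambda_{a'}\geq \tr(\tilde\rho^{x}_{a'}\tilde M^x_{a'})/\norm{\tilde\rho^x_{a'}}$-type estimates sharpens this to $\tfrac{1-2\epsilon}{1-\epsilon}$.

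Next I would redo the chain of inequalities in Eq.~\eqref{eq:app_proof_overlaps}, the only change being the replacement of the crude estimate $\tr((\tilde{M}^x_a)^2)\leq 1+d^2\epsilon^2$ by the qubit-specific $\tr((\tilde{M}^x_a)^2)=\lambda_a^2+\tr(\mathrm{Rest}_a^2)\leq 1+(\tr(\mathrm{Rest}_a))^2\leq 1+\epsilon^2$, using the refined $\tr(\mathrm{Rest}_a)\leq\epsilon$ above. Then
\[
\norm{\tilde{\rho}^x_a-\tilde{M}^x_a}_F \leq \sqrt{1+(1+\epsilon^2)-2(1-\epsilon)}=\sqrt{2\epsilon+\epsilon^2},
\]
and the state-overlap bound becomes $\epsilon+\sqrt{2\epsilon+\epsilon^2}$; one must check that this is $\leq \epsilon+\sqrt{\epsilon}$ on the stated range, which holds precisely because $2\epsilon+\epsilon^2\leq \epsilon$ would fail but the statement actually only claims $\epsilon+\sqrt\epsilon$ as an upper bound after using $\sqrt{2\epsilon+\epsilon^2}\le\sqrt\epsilon$ for $\epsilon\le\tfrac13$ — wait, that inequality needs $2\epsilon+\epsilon^2\le\epsilon$ which is false, so here I would instead be more careful and absorb the constant: since $\norm{\tilde M^x_a-\tilde\rho^x_a}\le\norm{\tilde M^x_a-\tilde\rho^x_a}_F$ is already an overestimate for a rank-two difference, I would instead use the operator-norm bound directly, $\norm{\tilde\rho^x_a-\tilde M^x_a}\le$ (something $O(\sqrt\epsilon)$), and track constants so that the final figure is $\epsilon+\sqrt\epsilon$; the condition $\epsilon\le\tfrac13$ is what makes the relevant quadratic estimates collapse to their linear surrogates. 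For the measurement overlaps I would reuse the second-step inequality $|\tr(\tilde M^x_a\tilde M^{x'}_{a'})-\tr(\tilde M^x_a\tilde\rho^{x'}_{a'})|\leq(1+\tr(\mathrm{Rest}_{a'}))\norm{\tilde\rho^x_a-\tilde M^x_a}\leq(1+\epsilon)\norm{\tilde\rho^x_a-\tilde M^x_a}$, giving the claimed $\epsilon+(1+\epsilon)\sqrt{\epsilon}$.

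The main obstacle I anticipate is the sharpening of the ``$\mathrm{Rest}$'' bound from the dimension-independent $d\epsilon$ (which for $d=2$ is $2\epsilon$) down to $\epsilon/(1-\epsilon)$ or $\epsilon$: this requires using the two-outcome structure effect-by-effect rather than summing over $a$, i.e.\ exploiting that $\lambda_1+\lambda_2+\tr(\mathrm{Rest}_1)+\tr(\mathrm{Rest}_2)=2$ with both $\lambda_a\in[1-\epsilon,1]$, which pins each $\tr(\mathrm{Rest}_a)$ between $0$ and roughly $\epsilon$. A secondary bit of care is needed to keep every quadratic-in-$\sqrt\epsilon$ correction below its linear counterpart, which is exactly where the hypothesis $\epsilon\le\tfrac13$ enters; I would verify the three displayed inequalities hold with the stated right-hand sides on all of $[0,\tfrac13]$ by monotonicity, rather than only asymptotically.
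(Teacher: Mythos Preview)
Your outline has a genuine gap at the crucial step: you need $\norm{\tilde\rho^x_a-\tilde M^x_a}\leq\sqrt{\epsilon}$, and you correctly notice that the Frobenius-norm route only gives $\sqrt{2\epsilon+\epsilon^2}$, which is strictly larger than $\sqrt{\epsilon}$ for every $\epsilon>0$. At that point you write that you would ``use the operator-norm bound directly'' and ``track constants'', but you never say how, and in fact no amount of bookkeeping with the $\mathrm{Rest}$-decomposition will recover the missing factor. The paper obtains $\norm{\tilde\rho^x_a-\tilde M^x_a}\leq\sqrt{\epsilon}$ by a genuinely different computation: it writes the full spectral decompositions $\tilde\rho^x_a=(1-\eta)\Id+(2\eta-1)\proj{\psi}$ and $\tilde M^x_a=\lambda_1\Id+(\lambda_0-\lambda_1)\proj{\phi}$, forms the $2\times2$ characteristic equation for the eigenvalues $\Lambda$ of $\tilde\rho^x_a-\tilde M^x_a$, and then uses the constraint $\tr(\tilde\rho^x_a\tilde M^x_a)\geq 1-\epsilon$ to lower-bound the constant term of that quadratic. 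Maximising the positive root over $\lambda_0\in[1-\epsilon,1]$, $\lambda_1\in[0,\epsilon]$ (the maximum sits at the corner $\lambda_0=1$, $\lambda_1=0$) yields exactly $\Lambda\leq\sqrt{\epsilon}$. This direct eigenvalue analysis is the missing idea; the Frobenius detour is intrinsically too lossy here.

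A smaller issue: your derivation of $\norm{\tilde\rho^x_a}\geq\tfrac{1-2\epsilon}{1-\epsilon}$ is also only sketched. Your $\mathrm{Rest}$-argument cleanly gives $\tr(\mathrm{Rest}_a)\leq\epsilon$ and hence $\norm{\tilde\rho^x_a}\geq 1-2\epsilon$, but the promised ``slightly more careful bookkeeping'' to reach $\tfrac{1-2\epsilon}{1-\epsilon}$ is not apparent from those ingredients alone. The paper again uses the two full spectral decompositions: from $\tr(\tilde\rho^x_a\tilde M^x_a)\geq 1-\epsilon$ one gets a lower bound on the larger eigenvalue $\eta$ of $\tilde\rho^x_a$ as a function of $|\braket{\phi}{\psi}|^2$, $\lambda_0$, $\lambda_1$; optimising (and separately checking that $|\braket{\phi}{\psi}|^2\leq\tfrac12$ is forced when $\epsilon\leq\tfrac13$, which is where that hypothesis actually enters) yields $\eta\geq\tfrac{1-2\epsilon}{1-\epsilon}$. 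Your handling of the measurement-overlap bound via the factor $(1+\epsilon)$ is fine and matches the paper once $\norm{\tilde\rho^x_a-\tilde M^x_a}\leq\sqrt{\epsilon}$ is in hand.
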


\begin{lemma}
\label{lemma:norm}
Let $\Gamma$ be the Gram matrix of Bloch vectors of target states $\{\rho^x_1\}_{x\in S}$. Likewise, let $\tilde{\Gamma}$ be the Gram matrix of Bloch vectors of experimental states $\{\tilde{\rho}^x_1\}_{x\in S}$. Let  $|p(b|a,x,y)-\tilde{p}(b|a,x,y)|\leq \epsilon$ for all $b,a,x,y$ and let $k=|S|$ be the cardinality of the set $S$. Then,  we have the following bounds on norms of $\Delta\Gamma = \tilde{\Gamma}-\Gamma$ 
\beq
 \norm{\Delta\Gamma}_F \leq  F_k(\epsilon)\ ,\nonumber \\ 
 \norm{\Delta\Gamma} \leq O_k(\epsilon)\ ,
\eeq
where 
\beq
&& F_k(\epsilon) = \sqrt{\epsilon}\sqrt{4k(k-1)}\sqrt{1+2\sqrt{\epsilon}+\frac{k+3}{k-1}\epsilon}\ , \nonumber\\
&& O_k(\epsilon) = 2((k-1)\sqrt{\epsilon}+(k+1)\epsilon)\ . \nonumber
\eeq
\end{lemma}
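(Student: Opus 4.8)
The plan is to bound every entry of the perturbation $\Delta\Gamma=\tilde{\Gamma}-\Gamma$ individually and then read off the two norm bounds from elementary matrix inequalities. First I would translate the Gram matrices into overlaps: for qubit states $\tr(\sigma_1\sigma_2)=\tfrac12(1+\mathbf{n}_1\cdot\mathbf{n}_2)$, so $\Gamma_{x,x'}=\mathbf{n}^x_1\cdot\mathbf{n}^{x'}_1=2\tr(\rho^x_1\rho^{x'}_1)-1$ and $\tilde{\Gamma}_{x,x'}=2\tr(\tilde{\rho}^x_1\tilde{\rho}^{x'}_1)-1$, whence $|\Delta\Gamma_{x,x'}|=2\,|\tr(\tilde{\rho}^x_1\tilde{\rho}^{x'}_1)-\tr(\rho^x_1\rho^{x'}_1)|$ for all $x,x'\in S$.

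For the off-diagonal entries ($x\neq x'$) I would invoke the second bound of Lemma~\ref{lemma:overlaps}, which gives $|\Delta\Gamma_{x,x'}|\leq 2(\epsilon+\sqrt{\epsilon})$. One should note that bound is stated there for $a\neq a'$, whereas we apply it with $a=a'=1$; this is legitimate because the proof of that estimate (the chain in Eq.~\eqref{eq:app_proof_overlaps}) only uses the perfect-correlation inequality $\tr(\tilde{\rho}^{x'}_1\tilde{M}^{x'}_1)\geq 1-\epsilon$ together with the statistics identities $\tr(\tilde{M}^{x'}_1\tilde{\rho}^{x}_1)=\tilde{p}(1|1,x,x')$ and $\tr(M^{x'}_1\rho^{x}_1)=p(1|1,x,x')$, all of which are available here. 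For the diagonal entries I would use $\Delta\Gamma_{x,x}=|\tilde{\mathbf{n}}^x_1|^2-1\leq 0$ (the target Bloch vectors are unit, the experimental ones are not longer): setting $p=\norm{\tilde{\rho}^x_1}$, the eigenvalues of $\tilde{\rho}^x_1$ are $p$ and $1-p$, hence $|\tilde{\mathbf{n}}^x_1|=2p-1$ and $|\Delta\Gamma_{x,x}|=1-(2p-1)^2=4p(1-p)$. The first bound of Lemma~\ref{lemma:overlaps} yields $p\geq\frac{1-2\epsilon}{1-\epsilon}\geq\frac12$ (the latter since $\epsilon\leq\frac13$), and as $p\mapsto 4p(1-p)$ decreases on $[\tfrac12,1]$ we obtain $|\Delta\Gamma_{x,x}|\leq 4\cdot\tfrac{1-2\epsilon}{1-\epsilon}\cdot\tfrac{\epsilon}{1-\epsilon}=\tfrac{4\epsilon(1-2\epsilon)}{(1-\epsilon)^2}\leq 4\epsilon$, where the final step is $1-2\epsilon\leq(1-\epsilon)^2$.

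It then remains to assemble the two claims. Since $\Delta\Gamma$ is real and symmetric, its operator norm is at most its maximal absolute row sum; each row has one diagonal entry (at most $4\epsilon$) and $k-1$ off-diagonal entries (each at most $2(\epsilon+\sqrt{\epsilon})$), giving $\norm{\Delta\Gamma}\leq 4\epsilon+2(k-1)(\epsilon+\sqrt{\epsilon})=O_k(\epsilon)$. For the Frobenius norm I would write $\norm{\Delta\Gamma}_F^2=\sum_{x\in S}(\Delta\Gamma_{x,x})^2+\sum_{x\neq x'}(\Delta\Gamma_{x,x'})^2\leq 16k\epsilon^2+4k(k-1)(\epsilon+\sqrt{\epsilon})^2$, expand $(\epsilon+\sqrt{\epsilon})^2=\epsilon+2\epsilon^{3/2}+\epsilon^2$, and check that the right-hand side equals $4k(k-1)\epsilon+8k(k-1)\epsilon^{3/2}+4k(k+3)\epsilon^2=F_k(\epsilon)^2$. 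The argument is essentially bookkeeping; the one place where care is needed is the diagonal estimate — one must retain the factor $\tfrac{1-2\epsilon}{1-\epsilon}$ rather than crudely using $p\leq 1$ inside $p(1-p)$, since the resulting diagonal contribution $16k\epsilon^2$ is exactly what produces the coefficient $4k(k+3)$ of $\epsilon^2$ in $F_k(\epsilon)^2$, and a looser bound would break the stated inequality.
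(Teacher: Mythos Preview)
Your proof is correct. The Frobenius-norm argument is essentially identical to the paper's (the paper keeps the sharper diagonal term $\frac{16k\epsilon^{2}(1-2\epsilon)^{2}}{(1-\epsilon)^{4}}$ before relaxing it to $16k\epsilon^{2}$, but arrives at the same final expression). You are also right to flag that Lemma~\ref{lemma:overlaps} is stated for $a\neq a'$ while here one needs $a=a'=1$; the paper silently uses it that way too, and your justification via Eq.~\eqref{eq:app_proof_overlaps} is the correct one.

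Where you genuinely diverge is the operator-norm bound. The paper splits $\Delta\Gamma$ into diagonal and off-diagonal parts, bounds $\norm{\mathrm{diag}(\Delta\Gamma)}\leq 4\epsilon$, and then treats $k=2$ and $k=3$ separately: for $k=2$ the off-diagonal part is a single entry, while for $k=3$ it invokes the inequality $\norm{A}\leq\sqrt{\tfrac{k-1}{k}}\,\norm{A}_F$ for traceless $A$ to get $\norm{\mathrm{offdiag}(\Delta\Gamma)}\leq 4(\sqrt{\epsilon}+\epsilon)$. Your row-sum (Gershgorin) bound $\norm{\Delta\Gamma}\leq\max_{x}\sum_{x'}|\Delta\Gamma_{x,x'}|$ is cleaner: it handles all $k$ uniformly in one line and lands exactly on $O_k(\epsilon)$ without any case distinction or auxiliary inequality. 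The paper's route has the minor conceptual advantage of isolating why the diagonal contribution is subleading, but yours is shorter and more transparent.
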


\begin{proof}[Proof of Theorem~\ref{th:rob_quant}] Let $\tilde{\Gamma}$ be the Gram matrix of Bloch vectors $\{\mathbf{\tilde{n}}^x_1\}_{x\in S}$ of the experimental states for $x\in S$. Let us for now assume that the Cholesky factorization $\tilde{\Gamma}=\tilde{L}\tilde{L}^T$ exists and let $\Delta L = \tilde{L}-L$. 

We start our \emph{first part of the proof} regarding the states $x\in S$, $a=1$ by connecting the square of the norm $\norm{\Delta L}_F$ with trace distance between the states $\{\tilde{\rho}^x_1\}_{x\in S}$ and $\{\rho^x_1\}_{x\in S}$. From the sketch of the proof of Theorem~\ref{th:rob_quant} it follows that the Bloch vectors $\{\vn^x_1\}_{x\in S}$ of the target states and the vectors $\{\mathbf{l}^x\}_{x\in S}$, transpose of which are rows of $L$, are connected via an orthogonal transformation, which we denote as $O$, i.e.,~$\mathbf{l}^x = O\vn^x_1$, $x\in S$. Similarly, for the Bloch vectors $\{\vnt^x_1\}_{x\in S}$ of the experimental states and the vectors $\{\mathbf{\tilde{l}}^x\}_{x\in S}$ that form $\tilde{L}$ there exists an orthogonal transformation $\tilde{O}$, such that $\mathbf{\tilde{l}}^x = \tilde{O}\vnt^x_1$, $x\in S$. Let us define states $\tau^x = \frac{1}{2}(\Id + \mathbf{l}^x\cdot\bm{\sigma})$, and $\tilde{\tau}^x = \frac{1}{2}(\Id + \mathbf{\tilde{l}}^x\cdot\bm{\sigma})$, $x\in S$. We know that there exist unitary transformations $V$ and $\tilde{V}$ such that: $\tau^x = V(\rho^x_1)^{(T)}V^\dagger$, and $\tilde{\tau}^x = \tilde{V}(\tilde{\rho}^x_1)^{(T)}\tilde{V}^\dagger$ for $x\in S$ (the optional transposition $(\cdot)^{(T)}$ is reserved for the cases $\det(O) = -1$ and $\det(\tilde{O}) = -1$). We have the following sequence of equalities valid for $x\in S$:
\beq
\label{eq:ap_fid_u}
\td{\tilde{\tau}^x}{\tau^x} &=& \td{\tilde{V}(\tilde{\rho}^x_1)^{(T)}\tilde{V}^\dagger}{V(\rho^x_1)^{(T)}V^\dagger} = \td{V^\dagger\tilde{V}(\tilde{\rho}^x_1)^{(T)}\tilde{V}^\dagger V}{(\rho^x_1)^{(T)}} \nonumber\\
&=& \td{U(\tilde{\rho}^x_1)^{(T)}U^\dagger}{\rho^x_1}\ ,
\eeq
where we used the invariance of trace distancee under unitary evolution and transposition, and also denoted the resulting unitary $V^\dagger\tilde{V}$ as $U$. It is clear that one transposition in the final formula in Eq.~\eqref{eq:ap_fid_u} is enough as the case $\det(\tilde{O}) = \det(O) = -1$ is equivalent to having $\det(\tilde{O}) = \det(O) = 1$ and changing $U$ to $U^T$. 

We need to show now that this unitary $U$ satisfies the claim of Theorem~\ref{th:rob} for the states $x\in S$. For qubits, trace distance can be expressed directly via Eucleedian distance between Bloch vectors:
\beq
\label{eq:ap_fid_tr}
\td{\tilde{\tau}^x}{\tau^x} = \frac{1}{2}|\mathbf{\tilde{l}}^x-\mathbf{l}^x|\ .
\eeq
Using this fact and Eq.~\eqref{eq:ap_fid_u}, it is straightforward to derive the following upper bound:
\beq
\label{eq:ap_fid_bound_3}
\frac{1}{k}\sum_{x=1}^k\td{U(\tilde{\rho}^x_1)^{(T)}U^\dagger}{\rho^x_1}\leq \frac{1}{2\sqrt{k}}\norm{\Delta L}_F\ .
\eeq
We now use Theorem [Sun 1991] (Ref.~\cite{sun1991perturbation}) to upper-bound $\norm{\Delta L}_F$. Specifically,  we apply it to the Gram matrix of the Bloch vectors of states $\{\rho^x_1\}_{x\in S}$. We have $\norm{L}_F = \sqrt{k}$, since the target states are assumed to be pure. The direct substitution of the bound in Eq.~\eqref{eq:ch_bound} into Eq.~\eqref{eq:ap_fid_bound_3} gives the bound in Eq.~\eqref{eq:rob_1}, and also the condition in Eq.~(\ref{eq:rob_cond}) when the Theorem~\ref{th:rob_quant} (and also Theorem~\ref{th:rob}) applies. 

In the beginning of the proof we assumed that the Cholesky factorization $\tilde{\Gamma} = \tilde{L}\tilde{L}^T$ exists. The condition in Eq.~\eqref{eq:rob_cond} gives the sufficient condition for this to hold. We conclude this part of the proof by noting that a similar statement in terms of fidelity (Theorem~\ref{th:rob_quant_fid}, Eq.~\eqref{eq:rob_1_2}) follows from Eq.~\eqref{eq:ap_fid_bound_3} by connecting fidelity to the trace distance as described in Eq.~\eqref{eq:final_fid_bound}.

In \emph{the second part of the proof} we derive upper bounds on the trace distances between the states corresponding to $x\in S$ and $a=2$. As we will see, those can be connected to the bounds for the states with $x\in S$ and $a=1$. Indeed, we can write the following for every $x\in S$:
\beq
\label{eq:ap_fid_a2}
2\td{U\tilde{\rho}^x_2U^\dagger}{\rho^x_2} \hspace{-12pt}&&= \norm{\rho^x_2-U\tilde{\rho}^x_2U^\dagger}_1\leq \norm{\Id-\rho^x_1-U(\Id-\tilde{\rho}^x_1)U^\dagger}_1+\norm{U(\Id-\tilde{\rho}^x_1-\tilde{\rho}^x_2)U^\dagger}_1\nonumber\\
&&\leq \norm{\rho^x_1-U\tilde{\rho}^x_1U^\dagger}_1+\norm{\Id-\tilde{\rho}^x_1+\tilde{M}^x_2}_1-\norm{\tilde{M}^x_2-\tilde{\rho}^x_2}_1\\
&&= 2\td{U\tilde{\rho}^x_1U^\dagger}{\rho^x_1}+\sum_{a=1,2}\norm{\tilde{\rho}^x_a-\tilde{M}^x_a}_1.\nonumber
\eeq
Exactly the same reasoning can be applied to upper bound the trace distance $\td{U(\tilde{\rho}^x_2)^{T}U^\dagger}{\rho^x_2}$, in which case the transposition will propagate to $\td{U(\tilde{\rho}^x_1)^TU^\dagger}{\rho^x_1}$ but would not affect the term $\sum_{a=1,2}\norm{\tilde{\rho}^x_a-\tilde{M}^x_a}_1$ (as we can take $(\tilde{M}^x_a)^T$). 

To finish the calculations we need to upper-bound the second summand in Eq.~(\ref{eq:ap_fid_a2}). We do it by writing $\norm{\tilde{\rho}^x_a-\tilde{M}^x_a}_1 \leq 2\norm{\tilde{\rho}^x_a-\tilde{M}^x_a} \leq 2\sqrt{\epsilon}$, $\forall a,x$, where the last inequality is proven in Lemma~\ref{lemma:overlaps} (see Eq.~(\ref{app:eq_refined_norm_bound})). From here, it is easy to obtain the bound in Eq.~(\ref{eq:rob_2}). The reason why we do not simply apply the same reasoning to the states corresponding to $a=2$ and $x\in S$ as for the states with $a=1$ is because we want the isometry $U$ to be the same for all of the states.

In \emph{the third part of the proof} we derive the bounds for the preparations states corresponding to $x\notin S$ and both $a=1,2$. We start by reminding ourselves that the set of states $\{\rho^x_1\}_{x\in S}$ is assumed to be tomographically complete. Is is equivalent to the assumption that the vectors $\{\mathbf{l}^x\}_{x\in S}$, defined above, are linearly independent and span $\Rl^3$ for $k=3$, or the considered subspace for $k=2$. The condition in Eq.~(\ref{eq:rob_cond}) of the Theorem~\ref{th:rob_quant} (also stated in Eq.~(\ref{eq:ch_bound_condition})) ensures that the same holds for the vectors $\{\mathbf{\tilde{l}}^x\}_{x\in S}$. If so, let us expand the Bloch vectors of the states $\rho^{x'}_{a'}$ and $U\tilde{\rho}^{x'}_{a'}U^\dagger$, $x'\notin S$ in terms of $\{\mathbf{l}^x\}_{x\in S}$ and $\{\mathbf{\tilde{l}}^x\}_{x\in S}$ respectively. Let us denote the coefficients of these linear expansions as $\{c^{a',x'}_x\}_{x\in S}$ and $\{\tilde{c}^{a',x'}_x\}_{x\in S}$ respectively. These coefficients will, of course, depend on $x'$ and $a'$ as well as on the choice of the set $S$. However, for simplicity of the derivations we are going omit the subscripts $x'$ and $a'$ until we present the final result. We will also assume, without loss of generality, that $S=\{1,2,3\}$ (or $S=\{1,2\}$ for $k=2$).

It is clear that the coefficients $\{c_x\}_{x\in S}$ and $\{\tilde{c}_x\}_{x\in S}$ satisfy the following respective systems of linear equations:
\beq
\Gamma\mathbf{c} = \mathbf{g},\quad \tilde{\Gamma}\mathbf{\tilde{c}} = \mathbf{\tilde{g}},
\eeq 
where $\mathbf{c} = (c_1,c_2,c_3)$, $\mathbf{g} = (2\tr(\rho^{x'}_{a'}\rho^1_1)-1,2\tr(\rho^{x'}_{a'}\rho^2_1)-1,2\tr(\rho^{x'}_{a'}\rho^3_1)-1)$ and analogously $\mathbf{\tilde{c}} = (\tilde{c}_1,\tilde{c}_2,\tilde{c}_3)$, $\mathbf{g} = (2\tr(\tilde{\rho}^{x'}_{a'}\tilde{\rho}^1_1)-1,2\tr(\tilde{\rho}^{x'}_{a'}\tilde{\rho}^2_1)-1,2\tr(\tilde{\rho}^{x'}_{a'}\tilde{\rho}^3_1)-1)$ whenever the cardinality $k$ of the set $S$ is $3$. For $k=2$ and $\mathbf{c},\mathbf{\tilde{c}},\mathbf{g},\mathbf{\tilde{g}}\in \Rl^2$ their definition is analogous. We again omitted the subscripts $a',x'$ for the vectors $\mathbf{g}$ and $\mathbf{\tilde{g}}$ for simplicity. The matrices $\Gamma$ and $\tilde{\Gamma}$ are still the moment matrices for the states $x\in S$, $a=1$.

In complete analogy to Eq.~(\ref{eq:ap_fid_tr}) we can write:
\beq
\td{U(\tilde{\rho}^{x'}_{a'})^{(T)}U^\dagger}{\rho^{x'}_{a'}}  = \frac{1}{2}\Big|\sum_{x\in S}(\tilde{c}_x\mathbf{\tilde{l}}^x-c_x\mathbf{l}^x)\Big|.
\eeq
We then can upper-bound the latter norm as follows:
\beq
\label{eq:ap_euc_exp}
&&\hspace{-12pt}\Big|\sum_{x\in S}(\tilde{c}_x\mathbf{\tilde{l}}^x-c_x\mathbf{l}^x)\Big|\leq \Big|\sum_{x\in S}c_x(\mathbf{\tilde{l}}^x-\mathbf{l}^x)\Big|+\Big|\sum_{x\in S}(\tilde{c}_x-c_x)\mathbf{\tilde{l}}^x\Big|\leq |\mathbf{c}\,|\norm{\Delta L}_F + \sum_{x\in S}|\tilde{c}_x-c_x||\mathbf{\tilde{l}}^x|.\nonumber\\&&
\eeq
In the equation above we used the following relation:
\beq
|\sum_{x\in S}c_x(\mathbf{\tilde{l}}^x-\mathbf{l}^x)| = \sqrt{\sum_{i=1}^3\left(\sum_{x\in S}c_x(\tilde{l}^x_i-l^x_i)\right)^2}\leq \sqrt{\sum_{x'\in S}c^2_{x'}\sum_{i=1}^3\sum_{x\in S}(\tilde{l}^x_i-l^x_i)^2} = |\mathbf{c}\,|\norm{\Delta L}_F.
\eeq
Also, since the norms $|\mathbf{\tilde{l}}^x|$ can be upper-bounded by $1$ for all $x$, the resulting upper bound can be simplified further to be $|\mathbf{c}\,|\norm{\Delta L}_F + \sum_{x\in S}|\tilde{c}_x-c_x|$.

To estimate the deviation $|\tilde{c}_x-c_x|$ we apply Theorem [Higham 2002] by taking $\Delta\Gamma  = \tilde{\Gamma}-\Gamma$ and $\mathbf{\Delta g} = \mathbf{\tilde{g}}-\mathbf{g}$. The bound on the operator norm $\norm{\Delta\Gamma}$ is given by Lemma~\ref{lemma:norm}. At the same time $|\mathbf{\Delta g}| = 2\sqrt{\sum_{x\in S}(\tr(\rho^{x'}_{a'}\rho^x_1)-\tr(\tilde{\rho}^{x'}_{a'}\tilde{\rho}^x_1))^2}\leq 2\sqrt{k}(\sqrt{\epsilon}+\epsilon)$. If we take $\delta' = 2((k-1)\sqrt{\epsilon}+(k+1)\epsilon)$, some matrix $E$ with $\norm{E}=1$, vector $\mathbf{f}$ such that $|\mathbf{f}| = \frac{\sqrt{k}}{k-1}$, we satisfy the conditions of the Theorem [Higham 2002] and get the following bound:
\beq
\label{eq:ap_stab_lin}
|\mathbf{c}-\mathbf{\tilde{c}}|\leq \left(|\mathbf{c}|+\frac{\sqrt{k}}{k-1}\right)\frac{\norm{\Gamma^{-1}}\norm{\Delta\Gamma}}{1-\norm{\Gamma^{-1}}\norm{\Delta\Gamma}}\ ,
\eeq
As the final step we need to connect the bound in Eq.~(\ref{eq:ap_euc_exp}) with the one in Eq.~(\ref{eq:ap_stab_lin}) by the relation between $1$-norm and the Euclidean norm in $\Rl^k$, which effectively adds a factor of $\sqrt{k}$ to the bound in Eq.~(\ref{eq:ap_stab_lin}). Combining everything together we obtain the following:
\beq
&&\hspace{-12pt}\td{U(\tilde{\rho}^{x'}_{a'})^{(T)}U^\dagger}{\rho^{x'}_{a'}} \leq \frac{1}{2}|\mathbf{c}\,|\norm{\Delta L}_F+\frac{\sqrt{k}}{2}\left(|\mathbf{c}|+\frac{\sqrt{k}}{k-1}\right)\frac{\norm{\Gamma^{-1}}\norm{\Delta\Gamma}}{1-\norm{\Gamma^{-1}}\norm{\Delta\Gamma}},\nonumber\\
&& x'\notin S, a'=1,2.
\eeq 
The above bound is given in Eq.~(\ref{eq:rob_3}) in terms of the quantity $E_{S,k}(\epsilon)$ (Eq.~(\ref{eq:rob_E})), where we took $\norm{\Delta L_S}_F= 2\sqrt{k}E_{S,k}(\epsilon)$ and brought back all the necessary subscripts. Again, using the relation in Eq.~(\ref{eq:tr_dist_fid_relation}) we can derive bounds on the fidelity between the states $\rho^x_a$ and $\tilde{\rho}^x_a$ for $x\notin S$.

In \emph{the fourth, final, part of the proof} we derive the bounds for the measurements. We do it by connecting the distance between the experimental measurements and the distance between the corresponding states. Indeed, we can write the following:
\beq
\norm{U(\tilde{M}^y_1)^{(T)}U^\dagger - M^y_1}\leq \norm{U(\tilde{\rho}^y_1)^{(T)}U^\dagger - \rho^y_1} + \norm{\tilde{\rho}^y_1-\tilde{M}^y_1},\quad \forall y,
\eeq
where we used the triangle inequality and the invariance of the norm $\norm{\tilde{\rho}^y_1-\tilde{M}^y_1}$ under unitary transformations. Remembering that $\norm{\tilde{\rho}^y_1-\tilde{M}^y_1}\leq \sqrt{\epsilon}$ (Lemma~\ref{lemma:overlaps}, Eq.~(\ref{app:eq_refined_norm_bound})) and $\norm{U(\tilde{\rho}^y_1)^{(T)}U^\dagger - \rho^y_1} = \td{U(\tilde{\rho}^y_1)^{(T)}U^\dagger}{\rho^y_1}$, we produce the bounds in Eqs.~(\ref{eq:rob_m_1},\ref{eq:rob_m_2}).
\end{proof}

Finally, we proceed to the proof of Theorem \ref{th:rob_quant_fid}, which gives a quantitative robustness analysis expressed in terms of average fidelity.

\begin{proof}[Proof of Theorem \ref{th:rob_quant_fid}]

Let us first state a useful identity between Frobenius distance and the fidelity, valid for an arbitrary pure states $\rho$ and a Hermitian operator $X$:
\begin{equation}\label{eq:fidFROB}
    \norm{X-\rho}^2_F=1+\tr(X^2)-2\tr(X\rho)\ .
\end{equation}

We follow exactly the same steps that were given in the proof of Theorem~\ref{th:rob_quant}. We assume that the reader is familiar with the notation introduced there.  First, in order to derive the bound in Eq.~\eqref{eq:rob_1_2} we repeat the reasoning preceding Eq.~\eqref{eq:ap_fid_u}. Analogously to the trace distance discussed there, we use the invariance of the fidelity under unitary transformations and transposition which gives us for all $x\in S=[k]$:
\begin{equation}
    \tr(\tilde{\tau}^x \tau^x  )= \tr(U (\tilde{\rho}^x_1)^{(T)} U^\dagger \rho^x_1  )\ ,
\end{equation}
where $U =V^\dagger\tilde{V}$. Using the above formula and standard algebra involving Pauli matrices we obtain:
\begin{equation}\label{eq:fisQUB}
\frac{1}{k}\sum_{x=1}^k \tr(U (\tilde{\rho}^x_1)^{(T)} U^\dagger \rho^x_1  ) = \frac{1}{2}+\frac{1}{2k}\sum_{x=1}^k  \mathbf{l}^x \cdot \mathbf{\tilde{l}}^x \ .
\end{equation}
On the other hand, we have the following identity:
\beq
\norm{\Delta L}^2_F = \sum_{x=1}^k |\mathbf{l}^x|^2 + \sum_{x=1}^k |\mathbf{\tilde{l}}^x|^2- 2 \sum_{x=1}^k  \mathbf{l}^x \cdot \mathbf{\tilde{l}}^x\ ,
\eeq
where $|\cdot|$ is a standard Euclidean norm in $\Rl^3$. Using the identity  $|\mathbf{\tilde{l}}^x|^2=2\tr((\tilde{\rho}^x_1)^2)-1$ and the fact that target states are pure (which implies $|\mathbf{l}^x|=1$) we obtain: 
\beq
\norm{\Delta L}^2_F = 2 \sum_{x=1}^k \tr((\tilde{\rho}^x_1)^2) - 2 \sum_{x=1}^k  \mathbf{l}^x \cdot \mathbf{\tilde{l}}^x\ .
\eeq
Inserting the above into Eq.~\eqref{eq:fisQUB} and using the fact that $\tr{(\tilde{\rho}^x_1)^2}\geq 1 - \frac{2\epsilon(1-2\epsilon)}{(1-\epsilon)^2} $ (this follows straightforwardly from Lemma \ref{lemma:norm}), we finally obtain:
\begin{equation}\label{eq:final_fid_bound}
 \frac{1}{k}\sum_{x=1}^k \tr(U (\tilde{\rho}^x_1)^{(T)} U^\dagger \rho^x_1  ) \geq 1- \frac{\epsilon(1-2\epsilon)}{(1-\epsilon)^2} -  \frac{\norm{\Delta L}^2_F}{4k}\ .
\end{equation}
We complete the proof of Eq.~\eqref{eq:rob_1_2} by again employing, exactly as before,  Theorem [Sun 1991] (Ref.~\cite{sun1991perturbation}) in order to upper-bound $\norm{\Delta L}_F$. 

Now, to derive Eq.~\eqref{eq:rob_2_2}, which gives the bounds for the fidelity of states for $x\in [k]$ and $a=2$, we can use the following inequality which  can be derived from Eq.~\eqref{eq:fidFROB}  and from the relation $\tr{(\tilde{\rho}^x_1)^2}\geq 1 - \frac{2\epsilon(1-2\epsilon)}{(1-\epsilon)^2}$:
\beq
\label{eq:tr_dist_fid_relation}
\tr(U (\tilde{\rho}^x_a)^{(T)} U^\dagger \rho^x_a  ) \geq 1 - \frac{\epsilon(1-2\epsilon)}{(1-\epsilon)^2}-(\td{U (\tilde{\rho}^x_a)^{(T)} U^\dagger}{\rho^x_a})^2.
\eeq
From Eqs.~(\ref{eq:ap_fid_bound_3},\ref{eq:ap_fid_a2}) it follows that: 
\beq
&&\frac{1}{k}\sum_{x=1}^k(\td{\tilde{\rho}^x_2)^{(T)} U^\dagger}{\rho^x_2})^2 \leq \frac{1}{k}\sum_{x=1}^k(\td{\tilde{\rho}^x_1)^{(T)} U^\dagger}{\rho^x_1}+2\sqrt{\epsilon})^2 \\
&&\leq \frac{\norm{\Delta L}^2_F}{4k} + \frac{2\norm{\Delta L}_F}{\sqrt{k}}\sqrt{\epsilon}+4\epsilon = \left(\frac{\norm{\Delta L}_F}{2\sqrt{k}}+2\sqrt{\epsilon}\right)^2, \nonumber
\eeq
which gives the desired bound form Eq.~\eqref{eq:rob_2_2}:
\beq\label{eq:final_fid_bound_2}
 \frac{1}{k}\sum_{x=1}^k \tr(U (\tilde{\rho}^x_2) ^{(T)} U^\dagger \rho^x_2  ) \geq 1- \frac{\epsilon(1-2\epsilon)}{(1-\epsilon)^2} -  \left(\frac{\norm{\Delta L}_F}{2\sqrt{k}}+2\sqrt{\epsilon}\right)^2\ .
\eeq
The proof of the remaining bounds in Eqs.~(\ref{eq:rob_3_2},\ref{eq:rob_m_1_2},\ref{eq:rob_m_2_2}) is straightforward and follows directly form  the formula in Eq.~\eqref{eq:fidFROB}. In particular, in order to prove Eq.~\eqref{eq:rob_3_2} we set $X=\tilde{\rho}^x_a$ and $\rho=\rho^x_a$ in Eq.~(\ref{eq:fidFROB}), use the inequality $\tr{(\tilde{\rho}^x_a)^2}\geq 1 - \frac{2\epsilon(1-2\epsilon)}{(1-\epsilon)^2}$ and the relation   $\norm{\sigma-\rho}^2_F = 2\td{\rho}{\sigma}^2$, with latter being true for arbitrary qubits states $\rho$ and $\sigma$. Derivation of Eq.~\eqref{eq:rob_m_1_2} is completely analogous to the one of Eq.~\eqref{eq:rob_2_2}. Finally, for Eq.~\eqref{eq:rob_m_2_2}, the reasoning is again analogous to Eq.~(\ref{eq:rob_3_2}), where, after setting  $X=\tilde{M}^y_1$ and $\rho=\rho^y_1$, it is additionally necessary to use a simple lower bound: $\tr((M^y_1)^2)\geq(1-\epsilon)^2$, which follows from the conditions of Theorem~\ref{th:rob_quant_fid}.
\end{proof}

\section{Alternative bounds from Procrustes}
\label{app:procrust}

\begin{lemma}[Robust self-testing for qubits from Procrustes]
\label{lemma:procrust}
Consider pure target qubit preparation states $\rho^x_a$ and target projective measurements $\Mb^y$,  where $a=1,2$ and $x,y\in [n]$.  Assume that $\rho^x_a = M^x_a$ for all $a,x$ and, furthermore, that experimental states $\tilde{\rho}^x_a$ and measurements $\Mbt^y$ act on Hilbert space of dimension at most $d$ and generate statistics $\tilde{p}(b|a,x,y)=\tr(\tilde{\rho}^x_a\tilde{M}^y_b)$ such that $|\tilde{p}(b|a,x,y)- \tr(\rho^x_a M^y_b)|\leq \epsilon$, for all  $a,b,x,y$. 

Let $\{\rho_i\}_{i=1}^m$ be a subset of $m$ considered states among $\rho^x_a$. Let $L$ be a matrix whose rows are the Bloch vectors of states $\rho_i$, $i\in [m]$, and let $k\in\lbrace{2,3\rbrace}$ be its rank ($m\geq k$). Assume, without loss of generality, that for $k=2$ the third component of the Bloch vectors is $0$. In that case, truncate $L$ to the first two columns. Let us define two auxiliary functions:
\beq
P_m(\epsilon,L) =  \begin{cases} \norm{L^\ddagger}F_m(\epsilon)+\min\left[\frac{\norm{L^\ddagger}F_m(\epsilon)}{\sqrt{1-\norm{L^\ddagger}^2F_m(\epsilon)}},\sqrt{\sqrt{k}F_m(\epsilon)}\right], & \text{if} \norm{L^\ddagger}\sqrt{F_m(\epsilon)} < 1 \\
\norm{L^\ddagger}F_m(\epsilon)+\sqrt{\sqrt{k}F_m(\epsilon)} & \text{otherwise,}
\end{cases}
\eeq
and
\beq
F_m(\epsilon) = \sqrt{4m(m-1)\epsilon\left(1+2\sqrt{\epsilon}+\frac{m+3}{m-1}\epsilon\right)},
\eeq
where $L^\ddagger = (LL^T)^{-1}L^T$. There exist a unitary matrix $U$ such that:
\beq
\frac{1}{m}\sum_{i=1}^m\tr(\rho_i U\tilde{\rho}_iU^\dagger) \geq 1-\frac{\epsilon(1-2\epsilon)}{(1-\epsilon)^2}-\frac{1}{4m}P^2_m(\epsilon,L).
\eeq

\end{lemma}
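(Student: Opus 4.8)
\noindent The plan is to rerun the Bloch-picture computation behind Theorem~\ref{th:rob_quant_fid}, but to replace the stability-of-Cholesky step --- which only applies to a linearly independent subset of size $k\le 3$ --- by a perturbation bound for the orthogonal Procrustes problem applied directly to the rectangular $m\times k$ matrix $L$ of \emph{all} chosen Bloch vectors (after the stated truncation it has full column rank $k$, with Moore--Penrose pseudoinverse $L^\ddagger$ and $\norm{L^\ddagger}=1/\sigma_k(L)$); this is where Ref.~\cite{arias2020perturbation} is used. First I would reduce the claim to an estimate on a Procrustes distance. Writing $\rho_i=\tfrac12(\Id+\mathbf n_i\cdot\bm\sigma)$, $\tilde\rho_i=\tfrac12(\Id+\tilde{\mathbf n}_i\cdot\bm\sigma)$ (legitimate, the experimental states being states on a qubit), and using that qubit unitaries act on Bloch vectors as $SO(3)$ rotations while orientation-reversing $O(3)$ maps are such rotations composed with transposition in a suitable basis (for $k=2$ every element of $O(2)$ lifts to an $SO(3)$ rotation, so no transposition is then needed, consistently with Theorems~\ref{th:rob_quant}--\ref{th:rob_quant_fid}), one gets for any $O\in O(k)$ and the associated $U$, via $\tr(\rho\sigma)=\tfrac12(1+\mathbf n_\rho\cdot\mathbf n_\sigma)$ and $|\mathbf n_i|=1$,
\beq
\label{eq:procrust_identity}
\frac1m\sum_{i=1}^m\tr\!\left(\rho_i\,U\tilde\rho_i U^\dagger\right)=1-\frac{1}{4m}\sum_{i=1}^m\left(1-|\tilde{\mathbf n}_i|^2\right)-\frac{1}{4m}\left\|L-\tilde L\,O^{T}\right\|_F^2 ,
\eeq
with $\tilde L$ the matrix of rows $\tilde{\mathbf n}_i^{T}$ (the $\tilde{\mathbf n}_i$ first projected onto the common $2$-plane of the $\mathbf n_i$ when $k=2$). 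Since $\tr(\tilde\rho_i^2)\ge 1-\tfrac{2\epsilon(1-2\epsilon)}{(1-\epsilon)^2}$ by Lemma~\ref{lemma:overlaps}, i.e.\ $1-|\tilde{\mathbf n}_i|^2\le\tfrac{4\epsilon(1-2\epsilon)}{(1-\epsilon)^2}$, \eqref{eq:procrust_identity} yields
\beq
\label{eq:procrust_reduced}
\frac1m\sum_{i=1}^m\tr\!\left(\rho_i\,U\tilde\rho_i U^\dagger\right)\ \ge\ 1-\frac{\epsilon(1-2\epsilon)}{(1-\epsilon)^2}-\frac{1}{4m}\min_{O\in O(k)}\left\|L-\tilde L\,O^{T}\right\|_F^2 ,
\eeq
so it suffices to show $\min_{O\in O(k)}\norm{L-\tilde L O^{T}}_F\le P_m(\epsilon,L)$ (for $k=3$, a global transposition of all experimental states may have to be absorbed, exactly as in Theorems~\ref{th:rob_quant}--\ref{th:rob_quant_fid}).

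For the Procrustes estimate, the data we control is the $m\times m$ Gram matrix of overlaps: the bounds in the proof of Lemma~\ref{lemma:norm} use only the pairwise-overlap and purity estimates of Lemma~\ref{lemma:overlaps}, which hold for \emph{every} pair of states among the $\rho^x_a$ (the few overlaps not literally covered --- same-measurement or same-outcome pairs --- being controlled through $\tilde\rho^x_b=\tilde M^x_b$, $\sum_b\tilde M^x_b=\Id$, as in the proof of Theorem~\ref{th:overlaps}), so $\norm{LL^{T}-\tilde L\tilde L^{T}}_F\le F_m(\epsilon)$ for our $m$-element subset too. I would then invoke the perturbation bound for the orthogonal Procrustes problem of Ref.~\cite{arias2020perturbation}: for $L$ of full column rank $k$, closeness of the row Gram matrices $LL^T$, $\tilde L\tilde L^T$ forces $\min_{O\in O(k)}\norm{L-\tilde L O^{T}}_F\le P_m(\epsilon,L)$. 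The two-case form of $P_m$ reflects the structure of that argument: one projects $\tilde L$ orthogonally onto $\mathrm{col}(L)$ via $\Pi=LL^\ddagger$, the residual $\norm{(\Id-\Pi)\tilde L}_F$ being bounded either crudely (for any $L$) by $\sqrt{\sqrt{k}\,F_m(\epsilon)}$ --- using that $(\Id-\Pi)(\tilde L\tilde L^T-LL^T)(\Id-\Pi)$ is positive semidefinite of rank $\le k$ and Frobenius norm $\le F_m(\epsilon)$ --- or, when $\norm{L^\ddagger}^2F_m(\epsilon)<1$, more sharply by a $\sin\Theta$/Davis--Kahan estimate yielding $\norm{L^\ddagger}F_m(\epsilon)/\sqrt{1-\norm{L^\ddagger}^2F_m(\epsilon)}$, which together give the inner $\min[\cdot,\cdot]$ of $P_m$; while the projected matrix $\Pi\tilde L$, whose columns lie in $\mathrm{col}(L)$ and whose row Gram matrix is within $F_m(\epsilon)$ of $LL^T$, is aligned to $L$ through the exact identity ``$AA^T=BB^T$ with $A,B$ of full column rank and equal column space $\Rightarrow A=BO$ for some orthogonal $O$'', the residual being a matrix-square-root perturbation of size $\sim\norm{L^\ddagger}F_m(\epsilon)$, the first summand of $P_m$. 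A triangle inequality combines the two pieces, and substitution into \eqref{eq:procrust_reduced} finishes the proof.

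The crux --- and the ingredient taken from Ref.~\cite{arias2020perturbation} --- is obtaining the \emph{linear-in-$F_m$} term $\norm{L^\ddagger}F_m(\epsilon)$ with the correct power of $\norm{L^\ddagger}$: generic Davis--Kahan and matrix-square-root bounds tend to lose an extra factor of $\norm{L^\ddagger}$, and recovering the stated dependence uses the first-order optimality of the Procrustes minimiser, which annihilates the ``infinitesimal-rotation'' direction --- precisely the direction along which $\norm{LL^T-\tilde L\tilde L^T}$ is only second order in the residual $\norm{L-\tilde L O^{T}}_F$. A minor, routine point is the rank-$2$ bookkeeping: one first replaces each $\tilde{\mathbf n}_i$ by its projection onto the target $2$-plane (an error already inside $F_m(\epsilon)$) and then notes that $O(2)$ elements carry no accompanying transposition.
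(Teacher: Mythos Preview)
Your proposal is correct and follows essentially the same route as the paper: both reduce the average fidelity to the Procrustes residual $\min_{O\in O(k)}\norm{L-\tilde L O}_F$ via the Bloch-vector identity, bound the purity defect by Lemma~\ref{lemma:overlaps}, bound $\norm{LL^T-\tilde L\tilde L^T}_F\le F_m(\epsilon)$ exactly as in Lemma~\ref{lemma:norm}, and then invoke the Arias--Castro perturbation bound of Ref.~\cite{arias2020perturbation} to obtain $P_m(\epsilon,L)$. The extra detail you give on the internal structure of the Procrustes bound, on the $k=2$ projection, and on the possible transposition is just elaboration of points the paper leaves implicit or absorbs into the citation; it is not a different argument.
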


In this section we derive alternative bounds for the fidelity between preparation states that follows from the bounds on the so-called orthogonal Procrustes problem~\cite{hurley1962procrustes}. The problem itself can be formulated as follows. Given two sets of vectors $\mathbf{x}_1,\mathbf{x}_2,\dots,\mathbf{x}_m$ and $\mathbf{y}_1,\mathbf{y}_2,\dots,\mathbf{y}_m$ in $\Rl^d$ find an orthogonal transformation $O\in \mathrm{O}(d)$ in $\Rl^d$ that minimizes $\sum_{i=1}^m|\mathbf{x}_i-O\mathbf{y}_i|$. This problem has a clear relevance to our task. Indeed, if we take $\mathbf{x}_i$ to be the Bloch vectors of the target qubit preparation states and $\mathbf{y}_i$ the Bloch vectors of the experimental states, then minimization over $\mathrm{O}(3)$ is the same as the problem of finding a unitary transformation that connects those qubit states. In Ref.~\cite{arias2020perturbation} the bounds on Procrustes problem were derived. We give formulation of Theorem 1 from Ref.~\cite{arias2020perturbation} below, where we change the notation according to our problem.

\begin{theorem*}[Arias-Castro et.al. 2020 - A perturbation bound for Procrustes]
Given two tall matrices $L$ and $\tilde{L}$ of same size with $L$ having full rank, and set $\delta^2 = \norm{\tilde{L}\tilde{L}^T-LL^T}_F$. Then we have 
\beq
\label{app:eq_procr}
\min_{O\in\mathrm{O}(k)}\norm{L-O\tilde{L}}_F\leq \begin{cases} \norm{L^\ddagger}\delta^2+\min\left[\frac{\norm{L^\ddagger}\delta^2}{\sqrt{1-\norm{L^\ddagger}^2\delta^2}},k^\frac{1}{4}\delta\right], & \text{if} \norm{L^\ddagger}\delta < 1 \\
\norm{L^\ddagger}\delta^2+k^\frac{1}{4}\delta & \text{otherwise.} \end{cases}
\eeq
\end{theorem*}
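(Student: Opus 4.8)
The plan is to first fix the (mildly ambiguous) conventions: I read $L,\tilde L\in\Rl^{m\times k}$ as tall matrices of full column rank $k$, with $O\in\mathrm{O}(k)$ acting on the coordinate ($k$) side, so that $LL^T$ is the \emph{rotation-invariant} $m\times m$ Gram matrix whose perturbation $\delta^2=\norm{\tilde L\tilde L^T-LL^T}_F$ is the quantity controlled, and $L^\ddagger=(L^TL)^{-1}L^T$ is the left pseudoinverse with $\norm{L^\ddagger}=1/\sigma_{\min}(L)$. (This is the reading consistent with the application, where the rows of $L$ are the Bloch vectors and $LL^T=\Gamma$.) Let $P=LL^\ddagger$ be the orthogonal projector onto the column space of $L$ and $P^\perp=I-P$. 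Because $P,P^\perp$ left-multiply while $O$ right-multiplies, they commute, and since $L$ lies in the range of $P$ the residual splits orthogonally (equivalently for $\norm{L-O\tilde L}_F$):
\[
\norm{L-\tilde L O}_F^2=\norm{L-P\tilde L O}_F^2+\norm{P^\perp\tilde L}_F^2 ,
\]
the second term being independent of $O$. Minimizing and using $\sqrt{a^2+b^2}\le a+b$ reduces the whole theorem to two independent estimates: an \emph{in-range} term $\min_O\norm{L-P\tilde L O}_F$, which will produce the common summand $\norm{L^\ddagger}\delta^2$, and an \emph{out-of-range} term $\norm{P^\perp\tilde L}_F$, which will produce the bracketed $\min[\,\cdot\,,\,\cdot\,]$.

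For the in-range term I would pass to an orthonormal basis $Q\in\Rl^{m\times k}$ of the range of $L$, writing $L=QR_1$ and $P\tilde L=QR_2$ with $R_1,R_2\in\Rl^{k\times k}$ and $R_1$ invertible. Since $Q$ is an isometry on this subspace, $\min_O\norm{L-P\tilde L O}_F=\min_O\norm{R_1-R_2 O}_F$, a genuinely nondegenerate $k\times k$ Procrustes problem. Taking polar decompositions $R_i=H_iU_i$ with $H_i=(R_iR_i^T)^{1/2}\succeq 0$ and choosing $O=U_2^{-1}U_1$ collapses this to $\norm{H_1-H_2}_F$. Now $R_1R_1^T=Q^TLL^TQ$ and $R_2R_2^T=Q^T\tilde L\tilde L^TQ$ differ by $Q^TEQ$ with $E=\tilde L\tilde L^T-LL^T$, so $\norm{R_1R_1^T-R_2R_2^T}_F\le\norm{E}_F=\delta^2$, while $\lambda_{\min}(R_1R_1^T)=\sigma_{\min}(L)^2=\norm{L^\ddagger}^{-2}$. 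The final step is the perturbation bound for the matrix square root: solving the Sylvester identity $H_1(H_1-H_2)+(H_1-H_2)H_2=R_1R_1^T-R_2R_2^T$ in the joint eigenbasis gives $\norm{H_1-H_2}_F\le\norm{R_1R_1^T-R_2R_2^T}_F/(\sqrt{\lambda_{\min}(R_1R_1^T)}+\sqrt{\lambda_{\min}(R_2R_2^T)})\le\norm{L^\ddagger}\delta^2$, which is exactly the standalone summand.

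For the out-of-range term I would bound $\norm{P^\perp\tilde L}_F$ in two ways and take the minimum. First, using $P^\perp LL^TP^\perp=0$ one gets $\norm{P^\perp\tilde L}_F^2=\tr(P^\perp E P^\perp)$, and $P^\perp E P^\perp=P^\perp\tilde L\tilde L^TP^\perp\succeq 0$ has rank at most $k$, so its trace equals its nuclear norm and is $\le\sqrt{k}\,\norm{P^\perp E P^\perp}_F\le\sqrt{k}\,\delta^2$; hence $\norm{P^\perp\tilde L}_F\le k^{1/4}\delta$, the robust branch. Second, from $PLL^TP^\perp=0$ the off-diagonal block satisfies $PEP^\perp=P\tilde L\,\tilde L^TP^\perp$, so $\delta^2\ge\norm{PEP^\perp}_F\ge\sigma_{\min}(P\tilde L)\,\norm{P^\perp\tilde L}_F$; lower-bounding $\sigma_{\min}(P\tilde L)\ge\sigma_{\min}(L)\,\sigma_{\min}(C)$ with $C=L^\ddagger\tilde L$ and $\sigma_{\min}(C)^2=\lambda_{\min}(I+L^\ddagger E(L^\ddagger)^T)\ge 1-\norm{L^\ddagger}^2\delta^2$ yields the refined branch $\norm{P^\perp\tilde L}_F\le\norm{L^\ddagger}\delta^2/\sqrt{1-\norm{L^\ddagger}^2\delta^2}$, valid precisely when $\norm{L^\ddagger}\delta<1$. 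Taking the minimum of the two and adding the in-range summand reproduces the stated inequality together with its case split.

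I expect the main obstacle to be the in-range estimate, specifically isolating a \emph{single} power of $\norm{L^\ddagger}$. The naive route $\norm{L(C-O)}_F\le\norm{L}\,\norm{C-O}_F$ leaks a factor $\norm{L}\norm{L^\ddagger}^2$ (a condition number); the clean constant appears only once one routes through the matrix-square-root / Sylvester bound above, whose effective Lipschitz constant is governed by $\sqrt{\lambda_{\min}}=1/\norm{L^\ddagger}$ rather than by $\sigma_{\max}(L)$. The secondary delicate point is the bookkeeping that makes the validity threshold $\norm{L^\ddagger}\delta<1$ of the refined out-of-range bound coincide exactly with the theorem's case distinction, and verifying that the rank-deficient $m\times m$ problem is faithfully reduced to the nondegenerate $k\times k$ one through the isometry $Q$.
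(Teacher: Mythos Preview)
The paper does not prove this statement at all: it is quoted verbatim as Theorem~1 of Ref.~\cite{arias2020perturbation} and used as a black box in the proof of Lemma~\ref{lemma:procrust}. So there is no ``paper's own proof'' to compare your attempt against.

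That said, your sketch is a correct and clean reconstruction of the Arias--Castro\,--\,Pelletier\,--\,Saligrama argument. The orthogonal splitting $\norm{L-\tilde L O}_F^2=\norm{L-P\tilde L O}_F^2+\norm{P^\perp\tilde L}_F^2$ is exactly the right first move; the in-range piece via polar factors and the Sylvester/van\,Hemmen--Ando bound $\norm{H_1-H_2}_F\le\norm{H_1^2-H_2^2}_F/(\lambda_{\min}(H_1)+\lambda_{\min}(H_2))$ does indeed isolate a single $\norm{L^\ddagger}$, and the two out-of-range estimates (rank-$k$ trace bound for $k^{1/4}\delta$, off-diagonal block argument for the refined branch) are both valid. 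Your reading of the conventions is also the right one: the theorem as quoted in the paper has $O$ on the wrong side for a tall $m\times k$ matrix, and the displayed formula $L^\ddagger=(LL^T)^{-1}L^T$ after the theorem is a typo for $(L^TL)^{-1}L^T$; you resolved both consistently.

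One small point worth tightening in a full write-up: when $P\tilde L$ fails to have full column rank (which can happen even under $\norm{L^\ddagger}\delta<1$ is not guaranteed a priori), the polar factor $U_2$ of $R_2$ is non-unique, but any choice works since you only need \emph{some} $O$ attaining the bound. Likewise, the inequality $\norm{(P\tilde L)(P^\perp\tilde L)^T}_F\ge\sigma_{\min}(P\tilde L)\norm{P^\perp\tilde L}_F$ uses that $P\tilde L$ is $m\times k$ with $m\ge k$ so that $V$ in its SVD is square; this is automatic here but worth stating.
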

\noindent In the formulation of the above theorem $L^\ddagger$ stands for the Moore–Penrose inverse, which can be defined as $L^\ddagger = (LL^T)^{-1}L^T$ for tall matrices of full rank.  

\begin{proof}[Proof of Lemma~\ref{lemma:procrust}]. Unitizing the results of the Theorem [Arias-Castro et.al. 2020] is rather straightforward. Let $L$ be a matrix which rows are the Bloch vectors of all target preparation states $\{\rho^x_a\}_{a,x}$. Let $\tilde{L}$ in turn be the matrix of Bloch vectors of the experimental states $\{\tilde{\rho}^x_a\}_{a,x}$. The matrices $LL^T$ and $\tilde{L}\tilde{L}^T$ are then, of course, the full Gram matrices $\Gamma$ and $\tilde{\Gamma}$. By ``full" we mean that now we do not select a subset $S$ of linearly independent vectors among the Bloch vectors of $\rho^x_a$. Instead, $\Gamma$ and $\tilde{\Gamma}$ are formed by all the considered states, which can still be a subset of the $2n$ states ($x\in [n]$, $a\in [2]$). We will be using $m$ to denote the number of the considered states, and a simple one-indexed set $\{\rho_i\}_i$ to denote the states themselves. Estimating $\delta^2$ from the formulation of the Theorem [Arias-Castro et.al. 2020] is a direct application of the bound on $\norm{\Delta\Gamma}_F$ from Lemma~\ref{lemma:norm}, where now instead of $k$ one should put the number $m$ of the considered preparation states.

As for the left-hand side of Eq.~(\ref{app:eq_procr}), we can write the following:
\beq
&&\norm{L-O\tilde{L}}_F^2 = \tr(LL^T)+\tr(\tilde{L}\tilde{L}^T) - 2\tr(L^TO\tilde{L}) \nonumber\\
&&\geq m+m-m\frac{4\epsilon(1-2\epsilon)}{(1-\epsilon)^2}-4\sum_{i=1}^m\tr(\rho_iU\tilde{\rho}_iU^\dagger)+2m
\eeq
where we use the following identity:
\beq
\tr(L^TO\tilde{L}) = \sum_{i=1}^m \mathbf{n}_i\cdot O\mathbf{\tilde{n}}_i = 2\sum_{i=1}^m\tr(\rho_iU\tilde{\rho}_iU^\dagger)-m,
\eeq
and $U$ is the unitary transformation in $\mathrm{SU}(2)$ corresponding to the orthogonal transformation $O$ of the Bloch vectors. The bound on the average fidelity between $m$ target preparation states and the corresponding experimental states is then simply:
\beq
\frac{1}{m}\sum_{i=1}^m\tr(\rho_i U\tilde{\rho}_iU^\dagger) \geq 1-\frac{\epsilon(1-2\epsilon)}{(1-\epsilon)^2}-\frac{1}{4m}P^2_m(\epsilon,L)
\eeq
where 
\beq
P_m(\epsilon,L) =  \begin{cases} \norm{L^\ddagger}F_m(\epsilon)+\min\left[\frac{\norm{L^\ddagger}F_m(\epsilon)}{\sqrt{1-\norm{L^\ddagger}^2F_m(\epsilon)}},\sqrt{\sqrt{k}F_m(\epsilon)}\right], & \text{if} \norm{L^\ddagger}\sqrt{F_m(\epsilon)} < 1 \\
\norm{L^\ddagger}F_m(\epsilon)+\sqrt{\sqrt{k}F_m(\epsilon)} & \text{otherwise,}
\end{cases}
\eeq
and
\beq
F_m(\epsilon) = \sqrt{4m(m-1)\epsilon\left(1+2\sqrt{\epsilon}+\frac{m+3}{m-1}\epsilon\right)}.
\eeq

\end{proof}

\section{Examples}
\label{app:ex}

Below we provide detailed derivations of the results presented in Table~\ref{tab:examples}.

The first example concerns $n=2,3$ MUBs in $d=2$. Since for MUBs $\tr(\rho^x_a\rho^{x'}_{a'}) = \frac{1}{2}$, for $x\neq x', \forall a,a'$, it follows that $\Gamma$ is an identity matrix in $\Rl^n$ ($n\in\{2,3\}$). Hence, in Theorem~\ref{th:rob_quant_fid} (see Appendix~\ref{app:rob}) we should take $\norm{\Gamma_S} = \norm{\Gamma^{-1}_S} = \norm{L_S} = 1$, and $\norm{\Gamma}_F = \sqrt{n}$. The resulting bound is the average between expressions in Eq.~(\ref{eq:rob_1}) and Eq.~(\ref{eq:rob_2}) with the function $E_{S,k}(\epsilon)$ being simply:
\beq
E_{S,k}(\epsilon) = \frac{1}{2\sqrt{2n}}\frac{F_k(\epsilon)}{\sqrt{1-O_k(\epsilon)}}.
\eeq
The leading linear term is given in Table~\ref{tab:examples} for both $n=2,3$.

The second example is a little less straightforward. From the condition $\tr(\rho^1_1\rho^2_1) = \frac{1+\alpha}{2}$, $\alpha \in (-1,1)$ we obtain that $\Gamma = \left(\begin{smallmatrix} 1 & \alpha \\ \alpha & 1 \end{smallmatrix} \right)$, and hence $\norm{\Gamma} = 1+|\alpha|$, $\norm{\Gamma^{-1}} = \frac{1}{1-|\alpha|}$, and $\norm{\Gamma}_F = \sqrt{2+2\alpha^2}$. The output $L$ of the Cholesky factorization is $L = \left(\begin{smallmatrix} 1 & 0 \\ \alpha & \sqrt{1-\alpha^2} \end{smallmatrix} \right)$, which leads to $\norm{L} = \sqrt{1+|a|}$. This also determines the minimum in Eq.~(\ref{eq:rob_E}) to be $\frac{\sqrt{1+|a|}}{\sqrt{2}}$. Plugging this values in Eq.~(\ref{eq:rob_E}) gives:
\beq
E_{\{1,2\},2}(\epsilon) = \frac{1}{4}\frac{\sqrt{1+|\alpha|}F_k(\epsilon)}{\sqrt{1-|\alpha|-O_k(\epsilon)}}.
\eeq
The final bound is again the average of the bounds in Eq.~(\ref{eq:rob_1}) and Eq.~(\ref{eq:rob_2}). The first order in $\epsilon$ for this bound is given in Table~\ref{tab:examples}. The applicability of the above bound is determined by the inequality $1-|\alpha|-O_2(\epsilon)\geq 0$. Importantly, the latter condition gives nonempty region $\epsilon\in [0,\epsilon_0)$ whenever $|\alpha|>0$.

The third example is a trine ensemble of states $(\rho^1_1,\rho^2_2,\rho^3_1)$, with $\rho^x_1=\frac{\Id}{2}+\frac{1}{2}\mathbf{n}_x\cdot\bm{\sigma}$, $x=1,2,3$, and where $\mathbf{n}_1 = (1,0,0)$, $\mathbf{n}_2 = \left(-\frac{1}{2},\frac{\sqrt{3}}{2},0\right)$, and $\mathbf{n}_3 = \left(-\frac{1}{2},-\frac{\sqrt{3}}{2},0\right)$. For this configuration of the preparation states the alternative robustness analysis via Procrustes (see Appendix~\ref{app:procrust}) gives better bounds. Given the vectors $\mathbf{n}_i$, $i=1,2,3$ we can directly compute that $\norm{L^\ddagger} = \sqrt{\frac{2}{3}}$. Inserting this value to Lemma~\ref{lemma:procrust} produces the results given in Table~\ref{tab:examples}. 

The fourth example is the tetrahedron, with $\mathbf{n}_1 = (0,0,1)$, $\mathbf{n}_2 = \left(\sqrt{\frac{8}{9}},0,-\frac{1}{3}\right)$, $\mathbf{n}_3 = \left(-\sqrt{\frac{2}{9}},\sqrt{\frac{2}{3}},-\frac{1}{3}\right)$, $\mathbf{n}_4 = \left(-\sqrt{\frac{2}{9}},-\sqrt{\frac{2}{3}},-\frac{1}{3}\right)$, and $\rho^x_1=\frac{\Id}{2}+\frac{1}{2}\mathbf{n}_x\cdot\bm{\sigma}$, $x=1,2,3,4$ as before. In this case, we also employ the bounds from Procrustes. For the above configuration of states we have that $\norm{L^\ddagger} = \frac{\sqrt{3}}{2}$, which leads to the results in Table~\ref{tab:examples}.

\section{Proofs of auxiliary results}\label{app:extra}

\begin{proof}[Proof of Lemma~\ref{lemma:overlaps}] First of all, we improve the bound on the norm of each of the experimental states. From $\norm{\tilde{M}^y_b}\geq 1-\epsilon$, for $d=2$ it follows immediately that the second (second largest) eigenvalue of each of the effects $\tilde{M}^y_b$ cannot exceed $\epsilon$. Hence, we can conclude that $\tr(\tilde{M}^y_b)\leq 1+\epsilon, \forall y,b$. 

Secondly, we can improve the bound on the norm of each of the experimental states. For that let us write the spectral decomposition of each experimental state and POVM effect:
\begin{align}
\tilde{\rho}^x_a &= \eta(\Id-\proj{\psi})+(1-\eta)\proj{\psi} = \eta\Id+(1-2\eta)\proj{\psi}\\
\tilde{M}^x_a &= \lambda_0\proj{\phi}+\lambda_1(\Id-\proj{\phi}),\nonumber
\end{align}
where we assume that $\lambda_0\geq \lambda_1$. We omitted the indices $x,a$ for $\eta$, $\lambda_0,\lambda_1$ and $\psi$,$\phi$ for simplicity. We can assume, without loss of generality, that $\eta\geq \frac{1}{2}$, and let us also assume for now that $|\braket{\phi}{\psi}|^2\leq \frac{1}{2}$. From the condition $\tr(\tilde{\rho}^x_a\tilde{M}^x_a)\geq 1-\epsilon$, it then follows that:
\beq
\label{eq:app_lemma_eta}
\eta(\lambda_0+\lambda_1)+(1-2\eta)(\lambda_0|\braket{\phi}{\psi}|^2+\lambda_1(1-|\braket{\phi}{\psi}|^2))\geq 1-\epsilon,
\eeq
from where we can obtain a lower bound on $\eta$, namely: 
\beq
\eta \geq \frac{1}{1-2|\braket{\phi}{\psi}|^2}\left(\frac{1-\epsilon-\lambda_1}{\lambda_0-\lambda_1}-\frac{1}{2}\right)+\frac{1}{2}.
\eeq
The expression on the right-hand side of the above inequality is maximal when $|\braket{\phi}{\psi}|^2 = 0$ and $\lambda_0 = 1$, $\lambda_1 = \epsilon$, which returns the bound $\norm{\tilde{\rho}^x_a} \geq \frac{1-2\epsilon}{1-\epsilon}$. 

Now let us return to our assumption $|\braket{\phi}{\psi}|^2\leq \frac{1}{2}$, for which the above bound is valid. We can upper-bound $\eta$ by $1$ in Eq.~(\ref{eq:app_lemma_eta}), which returns nontrivial upper bound on $|\braket{\phi}{\psi}|^2$ that happens to be $1-\frac{1-2\epsilon}{1-\epsilon}$. This function is below $\frac{1}{2}$ for $\epsilon\leq \frac{1}{3}$, i.e.,~for $\epsilon\leq \frac{1}{3}$ our newly-derived bound on $\norm{\tilde{\rho}^x_a}$ is valid. The region $\epsilon\in [0,\frac{1}{3}]$ is significantly larger than the resulting region in which our self-testing argument are valid, so this assumption does not affect our final results.

Let us now try to improve the bounds for the overlaps. In the proof of Theorem~\ref{th:overlaps} (see Eq.~(\ref{eq:app_proof_overlaps})) we have already established that:
\beq
|\tr(\tilde{\rho}^x_a\tilde{\rho}^{x'}_{a'})-\tr(\rho^x_a\rho^{x'}_{a'})| \leq  \epsilon + \norm{\tilde{\rho}^x_a-\tilde{M}^x_a},\quad \forall x\neq x',\forall a,a'.
\eeq
Let us now refine the bound on $\norm{\tilde{\rho}^x_a-\tilde{M}^x_a}$. For simplicity, we present this result below for a pair of operators $\rho = (1-\eta)\Id+(2\eta-1)\proj{\psi}$ and $M = \lambda_1\Id+(\lambda_0-\lambda_1)\proj{\phi}$, that satisfy the conditions $\tr(\rho M)\geq 1-\epsilon$, and $1-\epsilon\leq\lambda_0\leq 1$, $0\leq \lambda_1\leq \epsilon$. To compute the norm, we look for the eigenvector $\ket{\xi}$ of the operator $(\rho-M)$, i.e.,  $(\rho-M)\ket{\xi} = \Lambda\ket{\xi}$, corresponding to the maximal eigenvalue. We have the following quadratic equation for the eigenvalue $\Lambda$:
\beq
\label{eq:eig}
&&\hspace{-16pt}\Lambda^2-\Lambda(1-\lambda_0-\lambda_1)+\eta(1-\eta)-\lambda_1+\lambda_0\lambda_1-\eta(\lambda_0-\lambda_1)+(2\eta-1)(\lambda_0-\lambda_1)|\bk{\phi}{\psi}|^2=0.\nonumber\\&&
\eeq 
The sum of the roots of this equation is equal to $1-\lambda_0-\lambda_1$. Since we know that $|1-\lambda_0-\lambda_1|\leq \epsilon$, then either both roots are of the same sign, in which case the largest eigenvalue could only be $\epsilon$, or they are of the opposite sign. In the latter case, it is evident that the absolute values of both roots are maximal whenever the free term in Eq.~(\ref{eq:eig}) is minimal. We then can upper-bound the solutions to Eq.~(\ref{eq:eig}) by lower-bounding the free term using the condition $\tr(\rho M)\geq 1-\epsilon$. Indeed, from $\tr(\rho M)\geq 1-\epsilon$ we infer immediately that:
\beq
\tr(\rho M) = \lambda_0-\eta(\lambda_0-\lambda_1)+(2\eta-1)(\lambda_0-\lambda_1)|\bk{\phi}{\psi}|^2 \geq 1-\epsilon,
\eeq
and thus we reduce Eq.~(\ref{eq:eig}) to:
\beq
\label{eq:eig2}
\Lambda^2-\Lambda(1-\lambda_0-\lambda_1)+\eta(1-\eta)-\lambda_0 -\lambda_1+\lambda_0\lambda_1+1-\epsilon = 0.
\eeq
Using the same argument we can set $\eta=1$, because $\eta(1-\eta)\geq 0$. The positive root of Eq.~(\ref{eq:eig2}) is equal to:
\beq
\Lambda = \frac{1-\lambda_0-\lambda_1}{2}+\sqrt{\left(\frac{1-\lambda_0-\lambda_1}{2}\right)^2-(1-\lambda_0)(1-\lambda_1)+\epsilon}.
\eeq
It is easy to check that the above expression does not have any local maxima w.r.t. $\lambda_0$,$\lambda_1$ on the domain $1-\epsilon\leq\lambda_0\leq 1$, $0\leq\lambda_1\leq \epsilon$, whenever $\epsilon< \frac{1}{2}$, which we assume to be the case. Thus, we conclude that the maximal value of $\Lambda$ corresponds to the boundary of the region of $(\lambda_0,\lambda_1)$. By considering this boundary we find that this maximal value corresponds to the case of $\lambda_0=1$ and $\lambda_1=0$ which yields $\Lambda = \sqrt{\epsilon}$. From the above argument we finally conclude that: 
\beq
\label{app:eq_refined_norm_bound}
\norm{\tilde{\rho}^x_a-\tilde{M}^x_a}\leq \sqrt{\epsilon},\quad \forall a,
\eeq
which completes the proof for the state overlaps. From Eq.~(\ref{app:eq_refined_norm_bound}) and the fact that $\tr(\tilde{M}^y_b)\leq 1+\epsilon$, it is easy to obtain the improved bound on the overlaps between measurement effect.
\end{proof}

\begin{proof}[Proof of Lemma~\ref{lemma:norm}] Let us start by deriving the bound on the Frobenius norm $\norm{\Delta\Gamma}_F$:
\beq
\label{eq:app_frob_norm_der_1}
&&\norm{\Delta\Gamma}^2_F = \sum_{x=1}^{k}(|\Gamma_{x,x}-\tilde{\Gamma}_{x,x}|^2+\sum_{x'\neq x}|\Gamma_{x,x'}-\tilde{\Gamma}_{x,x'}|^2) \nonumber\\
&&= 4\sum_{x=1}^{k}\Big((1-\tr(\tilde{\rho}^x_1)^2)^2+\sum_{x'\neq x}|\tr(\rho^x_1\rho^{x'}_1)-\tr(\tilde{\rho}^x_1\tilde{\rho}^{x'}_1)|^2\Big).
\eeq
We have already established the bound on $|\tr(\rho^x_1\rho^{x'}_1)-\tr(\tilde{\rho}^x_1\tilde{\rho}^{x'}_1)|$ in Lemma~\ref{lemma:overlaps}. The bound on $(1-\tr(\tilde{\rho}^x_1)^2)^2$ can be obtained from the bound on the norm of each $\tilde{\rho}^x_1$. Namely, from the condition $\norm{\tilde{\rho}^x_a}\geq \frac{1-2\epsilon}{1-\epsilon}$ we can immediately conclude that:
\beq
1-\tr(\tilde{\rho}^x_1)^2 \leq \frac{2\epsilon(1-2\epsilon)}{(1-\epsilon)^2}.
\eeq
From here, it is easy to get to the final bound on $\norm{\Delta\Gamma}^2_F $, which reads:
\beq
\norm{\Delta\Gamma}^2_F \leq 4k(k-1)(\epsilon+\sqrt{\epsilon})^2+\frac{16k\epsilon^2(1-2\epsilon)^2}{(1-\epsilon)^4} \leq 4k(k-1)\epsilon\left(1+2\sqrt{\epsilon}+\frac{k+3}{k-1}\epsilon\right),
\eeq
where we made some approximations to simplify the result.

Now, let us derive the bound on $\norm{\Delta\Gamma}$. In principle, we know that $\norm{\Delta\Gamma}\leq \norm{\Delta\Gamma}_F$, but we can derive a better bound based on the fact that the diagonal entries of $\Delta\Gamma$ are much less than the off-diagonal entries.  

First of all, due to the triangle inequality, we can write $\norm{\Delta\Gamma}\leq \norm{\mathrm{diag}(\Delta\Gamma)}+\norm{\mathrm{offdiag}(\Delta\Gamma)}$, where we split $\Delta\Gamma$ on the diagonal and off-diagonal parts. The first term, $\norm{\mathrm{diag}(\Delta\Gamma)}$, can be easily bounded as follows:
\beq
\norm{\mathrm{diag}(\Delta\Gamma)} = 2\max_x(1-\tr(\tilde{\rho}^x_1)^2) \leq \frac{4\epsilon(1-2\epsilon)}{(1-\epsilon)^2}\leq 4\epsilon.
\eeq
As for the off-diagonal part, we give the proof for two cases $k=2$ and $k=3$ separately. For $k=2$, $\norm{\mathrm{offdiag}(\Delta\Gamma)} = 2|\tr(\rho^1_1\rho^{2}_1)-\tr(\tilde{\rho}^1_1\tilde{\rho}^{2}_1)|\leq 2\sqrt{\epsilon}+2\epsilon$, where we used the results of Lemma~\ref{lemma:overlaps}. As for $k=3$, we will need some intermediate result, namely the following relation:
\beq
\label{eq:frob_oper_norm}
\norm{A}\leq \sqrt{\frac{k-1}{k}}\norm{A}_F,
\eeq
where $k$ is the size of the matrix $A$ with $\tr(A)=0$. We give a proof of this below.

Let us assume that $\{\lambda_i\}_{i=1}^k$ are the eigenvalues of matrix $A$, hence we know that $\sum_{i=1}^k\lambda_i = 0.$ Let $\lambda_1$ be the largest eigenvalue, i.e.,~the operator norm of $A$, if $\lambda_1\geq 0$. If we wish to maximize the Frobenius norm of $A$ for fixed $\lambda_1$, the following lower-bound has to be satisfied:
\beq
\norm{A}_F^2 = \lambda_1^2+\sum_{i=2}^k\lambda_i^2\geq \lambda_1^2+\frac{1}{k-1}\left(\sum_{i=2}^k|\lambda_i|\right)^2\geq \lambda_1^2+\frac{1}{k-1}\left(\sum_{i=2}^k\lambda_i\right)^2 = \lambda_1^2\frac{k}{k-1},
\eeq
which proves the bound in Eq.~(\ref{eq:frob_oper_norm}). Using the above result, we obtain the following bound:
\beq
&&\hspace{-18pt}\norm{\mathrm{offdiag}(\Delta\Gamma)}\leq \sqrt{\frac{2}{3}}\norm{\mathrm{offdiag}(\Delta\Gamma)}_F = 2\sqrt{\frac{2}{3}}\sqrt{\sum_{x\neq x'}|\tr(\rho^x_1\rho^{x'}_1)-\tr(\tilde{\rho}^x_1\tilde{\rho}^{x'}_1)|^2}\leq 4(\sqrt{\epsilon}+\epsilon),\nonumber\\
&&
\eeq
which completes out proof.
\end{proof}

\bibliographystyle{unsrt}
\bibliography{self_test_prep_meas}

\end{document}